\newcommand{\R}{\mathbb{R}}    
\newcommand{\Nd}{\mathcal{N}}  
\newcommand{\bx}{\bm{x}}       
\newcommand{\by}{\bm{y}}       
\newcommand{\bu}{\bm{u}}       
\newcommand{\bs}{\bm{s}}       
\newcommand{\bmu}{\bm{\mu}}    
\newcommand{\bth}{\bm{\theta}} 
\newcommand{\bS}{\bm{\Sigma}}  
\newcommand{\ind}[1]{\bm{1}_{\set{#1}}}    
\newcommand{\norm}[2][]{\left\Vert#2\right\Vert_{#1}} 
\newcommand{\set}[1]{\left\{ #1 \right\}}             
\newcommand{\inp}[1]{\left\langle #1 \right\rangle}   
\newtheorem{theorem}{Theorem}
\newtheorem*{theorem*}{Theorem}
\newtheorem*{lemma*}{Lemma}
\newtheorem{corollary}{Corollary}
\newtheorem*{corollary*}{Corollary}
\newtheorem*{definition*}{Definition}
\numberwithin{equation}{section}
\title{Flexible nonstationary spatio-temporal modeling of high-frequency monitoring data}
\author{Christopher J. Geoga\thanks{corresponding author: \texttt{christopher.geoga@rutgers.edu}}
  \thanks{Department of Statistics, Rutgers University}
  \thanks{Division of Mathematics and Computer Science, Argonne National Laboratory}
        \and
        Mihai Anitescu \footnotemark[3]
                       \thanks{Department of Statistics, University of Chicago}
        \and
        Michael L. Stein \footnotemark[2] \footnotemark[4]
      }
\date{}
\begin{document}

\maketitle

\begin{abstract}
  Many physical datasets are generated by collections of instruments that make
measurements at regular time intervals. For such regular monitoring data, we
extend the framework of half-spectral covariance functions to the case of
nonstationarity in space and time and demonstrate that this method provides a
natural and tractable way to incorporate complex behaviors into a covariance
model. Further, we use this method with fully time-domain computations to obtain
bona fide maximum likelihood estimators---as opposed to using Whittle-type likelihood
approximations, for example---that can still be computed efficiently. We apply
this method to very high-frequency Doppler LIDAR vertical wind velocity
measurements, demonstrating that the model can expressively capture the extreme
nonstationarity of dynamics above and below the atmospheric boundary layer and,
more importantly, the interaction of the process dynamics across it.

\end{abstract}

\textbf{Keywords:}
Spectral domain, 
Doppler LIDAR,
Meteorology,
Atmospheric Boundary Layer

\bigskip

\section{Introduction}
Gaussian process (GP) models are versatile and useful tools for
studying spatial and spatio-temporal processes. A GP model is completely
specified by postulating a mean and covariance function, so that a collection of
data points $\by = \set{y_j}_{j=1}^N$ observed at locations
$\set{\bx_j}_{j=1}^N$ has the distribution
\begin{equation*} \label{eq:GPmodel}
  \by \sim \Nd(\bmu, \bS),
\end{equation*}
where $\bmu(\bx)$ is a mean function and $\bS_{j,k} = K(\bx_j, \bx_k)$ is the
covariance matrix induced by covariance function $K$. A common estimation
problem is to select a parametric form for $K$ and attempt to recover the
parameters from data. For the duration of this paper, we assume $\bmu \equiv 0$.
While this may seem restrictive, the analysis of anomaly fields instead of raw data
is commonplace, and extensions of these methods to a nonzero mean would not
be overly difficult. 

If $K(\bx, \by)$ is a function only of $\bx-\by$, we call the process model
\emph{stationary}, an assumption which is enormously useful for a variety of
both theoretical and computational reasons.  For the vast majority of real-world
processes, however, even among the ones for which a Gaussian process assumption
is tolerable, a stationarity assumption is not. With modern computing tools and
instrument infrastructure, spatial datasets are commonly large enough and dense
enough that even visual inspection of the data will make it abundantly clear
that a stationary process model would be fundamentally misrepresentative of the
dynamics of the process. For this reason, developing natural and flexible
nonstationary models to more accurately capture dependence structure in modern
datasets is an exigent statistical problem.

Modeling such nonstationarity is difficult, however, in no small part because it
is challenging to specify valid covariance models via positive definite
functions. In the stationary case, perhaps the easiest way to verify that a
function $K_{\bth}$ is a valid covariance function is to find an integrable
function for which $K$ in its one-argument form $K_{\bth}(\bx, \by) \equiv
K_{\bth}(\bx-\by)$ is its Fourier transform. One can then express $K_{\bth}$ as
\begin{equation*} 
  K_{\bth}(\bs) = \int e^{i 2 \pi \inp{\bm{f}, \bs}} S_{\bth}(\bm{f})
  \mathrm{d} \bm{f},
\end{equation*}
where $S_{\bth}(\bm{f})$ is called the \emph{spectral density} corresponding to
$K$.  If $S_{\bth}(\bm{f})$ is positive and symmetric about zero, Bochner's
theorem implies that $K_{\bth}$ is a positive definite real-valued function and
thus a valid covariance function. In general, it is easier to specify positive
and symmetric functions than positive definite ones, and many of the more exotic
stationary covariance functions are derived, or at least confirmed to be valid,
in the spectral domain. Many theoretical results about GPs that are relevant to
the parameter estimation problem, like those about the equivalence and orthogonality
of Gaussian measures, are proven and applied in the spectral domain
\citep{ibragimov1978, stein1999, zhang2004, zhang2005}, further emphasizing that
the spectral domain is arguably often a more productive way to approach studying
covariance structure.

Unfortunately, nonstationarity implies the nonexistence of a spectral density,
and the full-dimensional Fourier transform of most covariance functions only
exists in the distributional sense. Accordingly, it is necessary to find other
ways to specify valid models, and there is a wealth of literature in the field
that contributes methods and ideas towards solving this problem. Examples
include careful stitching together of valid ``local'' models \citep{higdon2002,
stein2005, paciorek2006}, deforming the coordinate system of a stationary
process \citep{genton2004, anderes2008}, spatial mixtures of different
stationary fields \citep{fuentes2001b}, nonparametric representations
\citep{sampson1992, jun2008}, multiresolution models \citep{nychka2002,
matsuo2011, katzfuss2017}, Bayesian methods \citep{katzfuss2012}, separable
models \citep{genton2007}, and many others. But each method has drawbacks, and
for some it is difficult to compute true maximum likelihood estimators, so that
approximated likelihoods are commonly substituted.  With this in mind, we extend
a methodology for continuing to think about nonstationary covariances in the
spectral domain by modeling marginal spectra and coherences. Like every other
method listed here, this strategy has its limitations and is not always
available or prudent to use. It does, however, provide a way to incorporate and
potentially compose several different ways of specifying nonstationarity and, as
will be shown here, can be pushed into complex forms while maintaining validity,
numerical robustness, and expressiveness.

\section{Half-spectral covariance functions}

A key observation made in \cite{cressie1999} and \cite{stein2005b} is that one
can obtain a valid model by specifying a ``half-spectral'' form of a covariance
function for a random field $Z(t, \bx)$, whereupon the stationary space-time
covariance function is obtained by performing an inverse Fourier transform of a
function $h(f, \bx-\bx')$ with respect to time. The function
\begin{equation} \label{eq:sit_original}
  K(t-t', \bx-\bx') := \int_{\Omega} e^{2 \pi i f (t-t')} h(f, \bx-\bx')
  \mathrm{d} f
\end{equation}
gives a valid covariance function for discrete-time data ($\Omega = [-1/2\Delta
t, 1/2 \Delta t)$) or continuous-time data ($\Omega=\R$). In discrete time, we
advocate using $\sin(\pi f)$ instead of $f$ so that the periodic extension of
the model is smooth with respect to $f$ at the endpoints.  \cite{stein2005b}
proposes an expression for $h$ of the form 
\begin{equation*} 
  h(f, \bx-\bx') = S(f) C_{f}(\bx-\bx') e^{i g(f) \bu^T (\bx-\bx')}
\end{equation*}
for a stationary but non-separable and asymmetric model, where $S(f)$ can be
interpreted to be a marginal spectral density in time; $C_{f}$---a valid spatial
correlation function---is the modulus of the coherence of $Z(\cdot, \bx)$ and
$Z(\cdot, \bx')$; and $g(f)\bu^T(\bx-\bx')$ is the phase relation. Varying
$C_f$ in (\ref{eq:sit_original}) allows flexibility in the nature of
nonseparability, with separability being the special case of $C_f$ and $g(f)$
not depending on $f$. 

For measurements made at regular time intervals and a relatively small number of
spatial locations, this modeling perspective can be considered as an augmented
multiple time series approach for which incorporating new time series
(measurements at a new spatial location) is possible in a model-consistent way
and without the requirement of any new parameters. It affords the practitioner a
convenient and approachable method for modeling space-time dependence primarily
by thinking about marginal---or at least lower-dimensional---dependence
structure. If the data is regular in time, for example, one can visually inspect
marginal time spectra and immediately include features of those estimates in a
valid, fully spatio-temporal covariance model.  See \cite{horrell2017} for an
extension to more complex stationary models of this form and \cite{guinness2013}
for a nonstationary model in a similar vein.

We now extend this modeling framework to the case of processes that are
nonstationary in space but stationary in time. Because this modeling perspective
relies so heavily on specifying marginals, a nonstationary extension is very
direct.
\begin{theorem} \label{thm:thm1}
Let $\set{S_{\bx}(f)}$ be temporal spectral densities indexed by spatial
locations $\bx$. If  $C_f(\bx, \bx')$ is a real-valued correlation function and
$g(f)$ is an odd function, then
\begin{equation} \label{eq:sit}
  K((t, \bx), (t', \bx')) = 
  \int_\Omega e^{i 2 \pi f (t-t') + i g(f)\bu^T(\bx-\bx')} 
  \sqrt{S_{\bx}(f) S_{\bx'}(f)}
C_f(\bx, \bx') \mathrm{d} f
\end{equation}
defines a valid covariance function for a spatially nonstationary real-valued
space-time process $Z(t, \bx)$.
\end{theorem}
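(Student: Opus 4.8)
The plan is to verify that the kernel defined by (\ref{eq:sit}) is positive definite by exhibiting it as an honest covariance, i.e.\ by constructing a Gaussian process whose covariance is exactly $K$. The natural route is the spectral one: show that for any finite collection of space-time points $\set{(t_j,\bx_j)}_{j=1}^n$ and any coefficients $\set{a_j}_{j=1}^n \subset \R$, the quadratic form $\sum_{j,k} a_j a_k K((t_j,\bx_j),(t_k,\bx_k))$ is nonnegative. First I would pull the sum inside the integral over $\Omega$, so that it suffices to show that for each fixed $f$ the integrand's ``kernel part'' is positive semidefinite as a function on the spatial index set, and then integrate against $\dif f \ge 0$ (the integral of a nonnegative integrand is nonnegative, assuming integrability, which follows from $S_\bx(f)$ being spectral densities and $\abs{C_f}\le 1$). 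Concretely, after pulling the sum in, the integrand at frequency $f$ is
\begin{equation*}
  \sum_{j,k} \left( a_j e^{i 2\pi f t_j} e^{i g(f) \bu^T \bx_j} \sqrt{S_{\bx_j}(f)} \right)
  \overline{\left( a_k e^{i 2\pi f t_k} e^{i g(f) \bu^T \bx_k} \sqrt{S_{\bx_k}(f)} \right)}
  \, C_f(\bx_j, \bx_k),
\end{equation*}
where I have used that $g$ is odd and real and the $t$- and $\bu$-phases are real, so that the cross term $e^{i2\pi f(t_j-t_k) + i g(f)\bu^T(\bx_j-\bx_k)}$ factors as $w_j \overline{w_k}$ with $w_j := a_j e^{i2\pi f t_j + i g(f)\bu^T\bx_j}$, and the real nonnegative factor $\sqrt{S_{\bx_j}(f)S_{\bx_k}(f)} = \sqrt{S_{\bx_j}(f)}\,\sqrt{S_{\bx_k}(f)}$ splits likewise into $s_j s_k$ with $s_j := \sqrt{S_{\bx_j}(f)} \ge 0$.

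The key step is then the observation that the displayed sum equals $\sum_{j,k} (w_j s_j)\overline{(w_k s_k)}\, C_f(\bx_j,\bx_k)$, which is a Hadamard (Schur) product of the rank-one positive semidefinite matrix $[(w_j s_j)\overline{(w_k s_k)}]_{j,k}$ with the matrix $[C_f(\bx_j,\bx_k)]_{j,k}$; the latter is positive semidefinite precisely because $C_f$ is a correlation function for each $f$. By the Schur product theorem the Hadamard product of two positive semidefinite matrices is positive semidefinite, so the integrand at each $f$ is $\ge 0$. Integrating over $\Omega$ preserves this, giving $\sum_{j,k} a_j a_k K \ge 0$. Finally I would note that $K$ is real-valued: replacing $f \mapsto -f$ in (\ref{eq:sit}) and using that $S_\bx$ is a (symmetric) spectral density, $C_f$ is real and even in $f$ as a correlation-modulus, and $g$ is odd, one sees the integrand maps to its complex conjugate, so $K = \overline{K}$. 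Hence $K$ is a real, symmetric, positive definite function and therefore a valid covariance function on $\R \times \set{\bx}$, which is the assertion.

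The main obstacle is essentially bookkeeping rather than depth: one must be careful about exactly which symmetry assumptions on $S_\bx$, $C_f$, and $g$ are needed to force $K$ to be \emph{real}-valued (as opposed to merely Hermitian positive definite, which would already give a valid complex-valued covariance), and about integrability so that Fubini applies and the integral of the nonnegative integrand is finite. In the discrete-time case $\Omega = [-1/(2\Delta t), 1/(2\Delta t))$ these are immediate; in the continuous-time case one needs $f \mapsto \sqrt{S_\bx(f)S_{\bx'}(f)}\,C_f(\bx,\bx')$ to be integrable, which follows from Cauchy--Schwarz and $\abs{C_f}\le 1$ once each $S_\bx$ is integrable. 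I do not expect the Schur-product core of the argument to present any difficulty; it is the standard mechanism by which modulus-of-coherence constructions of this type are shown to be valid.
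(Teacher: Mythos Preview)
Your proposal is correct and follows essentially the same approach as the paper: both arguments factor the integrand at each fixed $f$ into a Hadamard product of a rank-one positive semidefinite matrix (from the separable phase and spectral-density factors) with the correlation matrix $[C_f(\bx_j,\bx_k)]$, invoke the Schur product theorem, then integrate and check the $f\mapsto -f$ symmetry for real-valuedness. The only cosmetic difference is that you absorb the temporal exponential $e^{i2\pi f t_j}$ directly into the rank-one factor, whereas the paper isolates a spatial matrix $\bm{\Phi}(f)$ and leaves the time part to the Fourier-transform structure; your packaging is slightly more direct but the substance is identical.
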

\begin{proof}
  As in the proof provided in \cite{stein2005b}, we consider the matrix-valued function
\begin{equation} \label{eq:Phi}
  \bm{\Phi}(f) := \set{
    \sqrt{S_{\bx_j} (f) S_{\bx_k} (f)} C_f (\bx_j, \bx_k) e^{i g(f) \bu^T (\bx_j -
    \bx_k')}
  }_{j,k =1}^{n}.
\end{equation}  
It is sufficient to prove that $\bm{\Phi}(f)$ is positive semi-definite for each
$f$ and integrable. Note that we may write $\bm{\Phi}(f) = \bm{\mathcal{S}}(f) \circ
\bm{C}(f) \circ \mathcal{P}(f)$, a Hadamard product of the time-spectra term,
correlation function, and phase term.  $\bm{C}(f)$ is by definition positive
definite, and we note that $\bm{\mathcal{S}}(f)$ is positive semi-definite by
recognizing the rank one structure of $\bm{\mathcal{S}}(f) = \bu(f) \bu(f)^T$ with
$\bm{u}_j(f) = \sqrt{S_{\bx_j}(f)}$. Similarly, $\mathcal{P}(f)$ has a Hermitian
rank-one symmetric factorization since $e^{i g(f) \bu^T (\bx-\bx')} = e^{i g(f)
\bu^T \bx} e^{-i g(f) \bu^T \bx'}$.  By the Schur product theorem, then, we
conclude that $\bm{\Phi}(f)$ is positive semi-definite for all $f$. Since $\sqrt{S_{\bx_j} (f)
S_{\bx_k} (f)} \leq \max_{j} S_{\bx_j}(f)$ is integrable and both $C_f$ and the
complex exponential are bounded, $\bm{\Phi}(f)$ is integrable. Finally, we
observe that $\bm{\Phi}(f) = \overline{\bm{\Phi}(-f)}$ and thus conclude that K
is a real-valued positive definite function.
\end{proof}

Several comments are in order. First, the proof of Theorem \ref{thm:thm1}
demonstrates that this argument extends to a much broader class of functions
than the one that will be studied here. For example, an algebraic representation
of $\bm{\mathcal{S}}(f) = \bm{Q}(f) \bm{A}(f) \bm{Q}(f)^T$ where $\bm{A}$ is
positive semidefinite could provide an even richer nonparametric
representation for marginal and cross spectra of a spatio-temporal process.
Building rank-deficient positive semi-definite matrices is easy, and per the
argument of the proof above, such an extension would also yield an equally valid
real-valued covariance function. 

Second, temporal stationarity---or at most a restrictive form of
nonstationarity---is required by this model. Most importantly, the existence of
a spectral density is an implication of stationarity along that dimension. It is
of course possible to have processes that are locally stationary but whose
spectra evolve slowly and continuously with time (as were studied in
\cite{poppick2016}, for example), but even this extension would substantially
complicate the numerics required to build covariance matrices for maximum
likelihood. Further, as will be discussed below, performing space-time domain
computations with this model with minimal numerical concerns is only made
feasible by the use of the Fast Fourier Transform algorithm \citep{cooley1965}.
This means that the covariance function can only conveniently be evaluated at
the inverse Fourier frequencies, and so other relaxations of temporal
stationarity like deformation-type methods would also be difficult to apply.

Finally, we note that the above argument still applies if $C_f$ is a bounded
covariance function and not a correlation function, and that requiring it to be
a correlation function primarily serves to mitigate identifiability problems.
Moreover, $C_f$ need not be stationary. As will be demonstrated later in this
work, between the space-dependent marginal time spectra $S_{\bx}$ and a
nonstationary coherence function $C_f(\bx, \bx')$, it is easily possible to
compose this model with other nonstationary models to obtain an even more
flexible valid covariance function. 

Despite the difficulty discussed above, there are several restrictive ways to
relax the temporal stationary of this model. While slightly more complex and
involved ideas are possible, in this work we only employ the simplest possible
extension in the form of a nonstationary scale.
\begin{corollary}
  If $K$ is the function given in (\ref{eq:sit}) and $\lambda(\bx, t) \geq 0$
  everywhere, then the function
  \begin{equation} \label{eq:st_sit}
    \tilde{K}( (\bx, t), (\bx', t') ) := \lambda(\bx, t) \lambda(\bx', t') 
    K( (\bx, t), (\bx', t') ) + \eta^2 \ind{(\bx, t) = (\bx', t')}
  \end{equation}
  is also a valid covariance function.
\end{corollary}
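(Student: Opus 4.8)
The plan is to verify the defining property of a valid covariance function directly: $\tilde K$ is a valid real-valued covariance function if and only if it is symmetric and, for every finite collection of distinct space-time locations $\set{(\bx_j, t_j)}_{j=1}^n$, the matrix $\tilde{\bS} := \set{\tilde K((\bx_j,t_j),(\bx_k,t_k))}_{j,k=1}^n$ is positive semi-definite. Symmetry and realness of $\tilde K$ are immediate from Theorem~\ref{thm:thm1} (which supplies these properties for $K$), the fact that $\lambda$ is real-valued, and the symmetry of the indicator term, so the only real content is the positive semi-definiteness.

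For that, I would fix such a collection, set $\bS := \set{K((\bx_j,t_j),(\bx_k,t_k))}_{j,k}$ (positive semi-definite by Theorem~\ref{thm:thm1}), and let $\bm{v} \in \R^n$ have entries $v_j = \lambda(\bx_j,t_j) \geq 0$. Then
\begin{equation*}
  \tilde{\bS} = (\bm{v}\bm{v}^T) \circ \bS + \eta^2 \bm{I}_n,
\end{equation*}
with $\circ$ the Hadamard product and $\bm{I}_n$ the identity (the locations being distinct). The first summand is the Hadamard product of the rank-one positive semi-definite matrix $\bm{v}\bm{v}^T$ with the positive semi-definite matrix $\bS$, hence positive semi-definite by the Schur product theorem, exactly as in the proof of Theorem~\ref{thm:thm1}; concretely $\bm{a}^T((\bm{v}\bm{v}^T)\circ\bS)\bm{a} = (\bm{v}\circ\bm{a})^T \bS (\bm{v}\circ\bm{a}) \geq 0$ for every $\bm{a} \in \R^n$. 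The second summand is positive semi-definite since $\eta^2 \geq 0$, and a sum of positive semi-definite matrices is positive semi-definite, giving $\tilde{\bS} \succeq 0$.

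An equivalent and more transparent route is to observe that $\tilde K$ is precisely the covariance function of $\tilde Z(\bx,t) = \lambda(\bx,t) Z(\bx,t) + \eta\, W(\bx,t)$, where $Z$ is the mean-zero process with covariance $K$ supplied by Theorem~\ref{thm:thm1} and $W$ is an independent mean-zero white-noise process of unit variance; a function that arises as the covariance of a bona fide process is automatically a valid covariance function.

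There is essentially no obstacle here; the only point needing a word of care is the nugget, where I restrict to distinct locations (the standard convention) so that $\ind{(\bx,t)=(\bx',t')}$ contributes $\bm{I}_n$. If repeated locations are permitted, that term instead contributes a block-diagonal matrix with all-ones blocks, which remains positive semi-definite, so the conclusion is unchanged.
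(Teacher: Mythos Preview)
Your proof is correct. The paper actually states this corollary without proof, treating it as an immediate consequence of Theorem~\ref{thm:thm1} together with standard facts about covariance functions; your argument fills in precisely the details one would expect, and it reuses the same Schur-product/rank-one machinery from the proof of Theorem~\ref{thm:thm1}, so it is entirely in the spirit of the paper's approach.
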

With the further addition of a ``temporal nugget'' $\eta_{t}^2 \ind{t=t'}$, in
the same spirit as the ``spatial nugget'' of \cite{gneiting2002}, this
model is precisely the formulation we use here to explore our data.

For a relatively small number of spatial locations, using the FFT to numerically
evaluate these integrals for all pairs of spatial locations and then to assemble
covariance matrices is straightforward. With little numerical work, then, one
can write simple models for marginal---or at least lower-dimensional---behavior
and obtain all the function values necessary to build a valid space-time
covariance matrix in $\mathcal{O}(n_s^2 n_t \log_2 n_t)$ time complexity, where
$n_s$ is the number of unique spatial locations and $n_t$ is the time width of
the process.  In particular, the integral approximation for discrete-time
processes given by
\begin{equation*} 
  \int_{-1/2}^{1/2} 
  e^{2 \pi i f t} g(\sin \pi f) df 
  \approx 
  \frac{1}{N} \sum_{j=0}^{N-1} e^{ 2 \pi i f_j t} g(\sin \pi f_j)
\end{equation*}
for Fourier frequencies $\set{f_j}$ can be evaluated simultaneously and
efficiently at all necessary time lags $t$ via the inverse FFT algorithm. This is
effectively an application of trapezoidal integration, and by also taking
sufficiently long FFTs to avoid edge effects, these approximations can easily be
made accurate to double precision for kernels that can be checked with a closed
form expression. In our experience, taking an FFT length of $N = 7 \cdot n_t$ is
more than sufficient. To the degree that any linear algebra for large
ill-conditioned matrices is exact in finite precision arithmetic, then, the
evaluation of the likelihood with covariance matrices assembled from this
approximation is exact. As further evidence of this, the log-likelihood for FFT
sizes ranging from $5 \cdot n_t$ to $21 \cdot n_t$ all agree to at least single
precision in our experimentation, further suggesting that this numerical scheme
provides effectively exact function evaluation and thus provides a way to
minimize the exact likelihood.

Despite the many FFTs being computed here, in our use case with on the order of
$30$ spatial locations and one thousand time locations, evaluating the function
at all necessary space-time locations and aggregating the values takes less than
one second on a Dell Latitude E5470 with an Intel Core i5-6200U@2.3GHz CPU, a
mobile processor not designed for heavy computation, and poses no numerical or
computational challenge. Once these values (and the values for the derivatives
of (\ref{eq:st_sit}) with respect to kernel parameters) have been computed,
covariance matrices can be assembled and the likelihood and its derivatives can
be evaluated as usual.

To motivate the application of this model, let us introduce the dataset to be
explored in this work.

\section{Doppler LIDAR wind measurement data}

The US Department of Energy's Atmospheric Radiation Measurement (ARM) program
was created with the goal to provide long-term in situ and remotely sensed
observations in various climate regimes to improve the understanding of
processes impacting atmospheric radiation \citep{stokes1994}. The Southern Great
Plains (SGP) observatory is the first ARM site and the largest climate research
facility in the world, equipped with more than 50 instruments and providing
continuous measurements ranging from basic meteorology and radiation to cloud
and aerosol properties at several locations in the north-central Oklahoma and
south Kansas region \citep{mather2013, sisterson2016}. In this work, we study
measurements of the vertical component of the wind field at very high
spatio-temporal resolution as observed by the $\SI{1.5}{\micro\meter}$ pulsed
Doppler lidar (DL) deployed at the SGP central facility (CF).\footnote{The ARM
processed/quality-controlled DL fixed-beam stare measurements are freely
available \citep{Newsom2010}. The raw data was used in this work, which can be
obtained upon request at \texttt{https://www.arm.gov} using the ``Ask us'' page.}

The principle of operation of the DL is similar to that of the Doppler radars in
that they both transmit pulses of energy into the atmosphere and measure the
returned signal scattered back by clear-air irregularities \citep{gage1978,
muradyan2020}. In other words, scattering is assumed to originate from
atmospheric particulates moving at the same speed as the wind. The ARM DL has a
full upper-hemispheric scanning capability, measuring the one-dimensional
velocity projections across a range of angles in order to fully resolve the
three-dimensional wind field \citep{newsom2012}. The general DL scanning
strategy for this purpose at the SGP CF is to perform 8-beam
plan-position-indicator scans once every approximately 15 minutes, and all
sampling between those times is done in a vertically staring mode. When the DL
is pointing vertically, it provides height- and time-resolved measurements of
the vertical velocity. 

As we focus exclusively on vertical measurements here, the resulting data from
this instrument is multiple gappy time series of the vertical wind velocity
components at $\SI{30}{\meter}$ vertical intervals (commonly referred to as ``range
gates"), which extend to a maximum of $\SI{9.6}{\km}$ range and are sampled at
approximately $1.2$ second intervals.  For more details about the instrument's
theory of operation, functionality, and configuration, see \cite{newsom2012}.
Figure \ref{fig:data_tsplot} shows an example collection of time series plots of
the data at discrete range gates over the duration of fifteen minutes on June 2,
2015, while Figure \ref{fig:data_hmplot} shows the time-height cross-section of
the same data from about $\SI{200}{\meter}$ to $\SI{2.25}{\km}$ above ground level (AGL). 

\begin{figure}[!ht]
  \centering
  \input{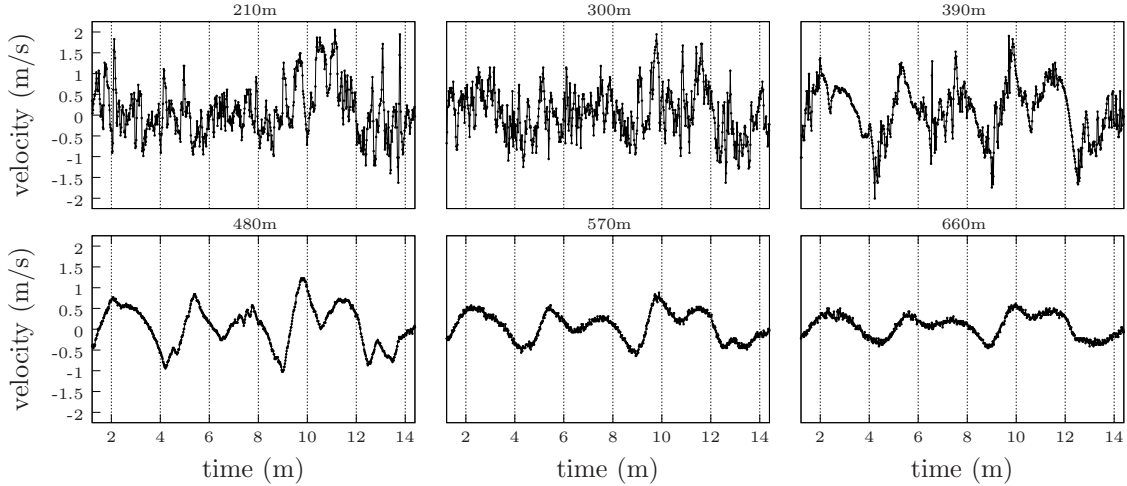}
  \caption{Fifteen minutes of Doppler LIDAR vertical wind velocity measurements
  at several different range gates on June $2$, $2015$ at $1400$UTC,
  shortly after sunrise.} 
  \label{fig:data_tsplot}
\end{figure}

As one would expect, this data exemplifies extreme nonstationarity, which no
sensible model could ignore. In space, the primary source of nonstationarity is
the height of the atmospheric boundary layer (ABL), which is the layer directly
in contact with the earth's surface through turbulent mixing processes
\citep{tucker2009}. As can be seen in Figure \ref{fig:data_hmplot} below, the
ABL height increases during the day via solar heating of the surface, resulting
in a well-mixed daytime convective boundary layer (BL). In the absence of solar
radiation at night, the convection is stopped by heat loss to space, resulting
in a thin nocturnal BL with typically
$\SIrange[range-phrase=-,range-units=single]{100}{300}{\meter}$ depth
\citep{cushman2014}.  Above the BL height, the concentrations of particulates
such as aerosols and cloud particles, which are the source of scattering,
dramatically decrease, resulting in a substantial reduction in the backscatter
signal and thus measurement quality.  Typically, only the lowest $2$ to
$\SI{3}{\km}$ of the atmosphere yields high-quality measurements
\citep{newsom2012}.  In addition, the minimum range for the DL is approximately
$\SI{100}{\meter}$, and even if data is available below that height, the lowest
range gates are suspect and are recommended not to be used (personal
communicaton with instrument mentor). As the maximum operational range of the
ARM DL is up to $\SI{9.6}{\km}$, this is a substantial truncation, but
inspection of the data on its full domain would involve a great deal of
preprocessing measurement artifacts, which is outside the scope of this work.\footnote{\texttt{\texttt{https://plot.dmf.arm.gov/PLOTS/sgp/sgpdlfpt/20150602/sgpdlfptC1.b1.fpt\_12-15hour.20150602.png}},
  for example, gives an example of ARM's automatic visualization of DL
measurements from $1200$ to $1500$ UTC on June $02$, $2015$, demonstrating the
vertical extent of measurements with good SNR.}

In order to study the dependence structure across the BL and capture the
dynamics above and below simultaneously without having to contend too deeply
with measurement quality concerns or computational issues, we focus this work on
the time $1400$ UTC---shortly after the local sunrise in northern
Oklahoma---when the ABL height has grown to approximately
$\SIrange[range-phrase=-,range-units=single]{400}{500}{\meter}$ on
average.  This part of the day is convenient as the ABL height is well above the
minimum altitude of DL operation, so that plenty of high-quality measurements
are available below it as well as above it. For the purpose of manageable data
sizes, we focus on altitudes between about $200$ and $\SI{850}{\meter}$ in order to
have large spatial samples of range gates both above and below the BL while also
being able to work with full measurement segments between 3D scans. For
reference, considering range gates $7$ to $28$ and a full time segment of
vertical measurements (about $14$ minutes) amounts to approximately $17$
thousand data points.

\begin{figure}[!ht]
  \centering
  \makebox[\textwidth][c]{\input{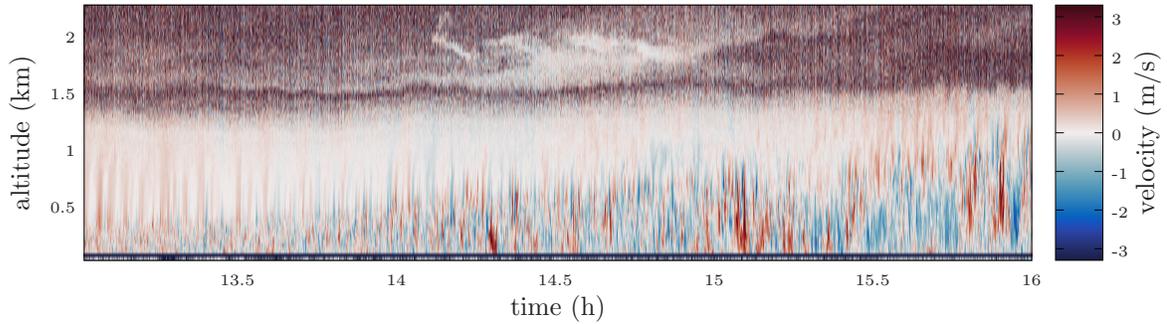}}
  \caption{An alternative view of the process from $1300$ to $1600$ UTC on June
    $24$th, $2015$, a day with particularly high activity, demonstrating the
    rising mixing height after sunrise and the dynamic domain of data
    reliability. The color scale for this image is highly truncated, as the
  unreliable measurements at altitudes with low SNR range sporadically between
  $\SI[per-mode=symbol]{20}{\meter\per\second}$ and
  $\SI[per-mode=symbol]{-20}{\meter\per\second}$.} 
  \label{fig:data_hmplot}
\end{figure}

This data source has several features and properties that make it interesting to
study. For one, the spatial component is vertical and not horizontal. Just as
time should not be treated as just another spatial dimension, we believe that
the vertical dimension of many physical processes should be treated with the
same care. The measurements here provide a good testing set for vertical
modeling as well as a strong justification of this claim, as the spatial
nonstationarity is very sharp and likely not even once differentiable, which is
unusual to observe on such a fine spatial scale and with such high covariance.

As well as being uncommon in its spatial axis, this data source is a good motivating
example for nonstationary models. The rate at which marginal dynamics change is
noticeably faster than the rate at which covariance information decays, which is
to say that two range gates can have obviously different dynamics but still be
strongly informative about each other. The measurements at $300$ and
$\SI{480}{\meter}$ in Figure \ref{fig:data_tsplot}, for example, clearly have
very strong coherence at low frequencies despite how different finer-scale
behavior is at each altitude.  Any locally stationary model would necessarily be
throwing out a great deal of information of that kind, which is arguably of the
highest scientific interest. In some cases, the creation of intricate
nonstationary models has been criticized for not actually capturing more
behavior, as it is frequently the case that most information that can be learned
about a spatial or spatio-temporal process corresponds to high frequency
behavior of the spectral density (or is otherwise ``micro-ergodic''), which thus
largely corresponds to local information \citep{stein1999}. This data is an
example of a situation when that type of pessimism is unnecessary. 

Finally, the measurements used in this work are made at a much higher frequency
than most other sources of environmental data. At the time scales of these
measurements, a great deal can be inferred about smoothness properties of the
field, as well as the study of wind gusts, turbulence and other small-scale
meteorological phenomena, which will especially be critical in the lower portion
of the ABL. In general, the features that require high-frequency resolution to
observe also make the data more challenging to study. Things like measurement
quantization when the wind velocity is very low at night, for example, are real
concerns, as well as distinguishing real short time-scale velocity fluctuations
from instrument noise. As will be clear, this work is only scratching the
surface of what can be studied with this type of high-frequency and
high-resolution monitoring data, and it is precisely its associated challenges that make
it interesting.

\section{A half-spectral model for Doppler LIDAR data}

One of the strongest features of the data, as can be seen in Figure
\ref{fig:data_hmplot}, is the strong low-frequency coherence across the entire
domain, but especially above the ABL. Another is that the scale of the
measurements appears to decay as the altitude increases above the ABL height.
For the purposes of both parameter economy and interpretability, we introduce
first a functional form that we use extensively in several parts of this
model. Taking inspiration from the frequency response function of the
Butterworth filter from signal processing \citep{butterworth1930}, whose
signature feature is being flat in some neighborhood of zero and then decaying
algebraically, we define the utility function $B(z)$ as
\begin{equation*} \label{eq:Sx}
  B(z; \xi_0, \xi_1, \xi_2) 
  := e^{\xi_0} \left[1+ \left( \frac{z}{\xi_1} \right)^{2 \xi_2} \right]^{-1}.
\end{equation*}
As we demonstrate below, this function is convenient in three different parts of
the specification of the model given by Equation \ref{eq:st_sit}.

With that preliminary complete, let us begin by specifying a form for the
nonstationary scale parameters, which we choose to represent in a flexible form
as
\begin{equation*} 
  \lambda(x,t) = B( (x-\beta)_+ \; ; \; 0, \phi_1, \phi_2)
    \sum_{j=1}^4 w_j(t) \theta_{t_j},
\end{equation*}
where $\set{t_j}$ are knots in time chosen evenly across the domain, and
$w_j(t)$ are time-dependent weights that are normalized to sum to one, with
unnormalized forms given by $\tilde{w}_j(t) := e^{-|t-t_j|/50 \Delta t}$ (for
  reference, the time window for the data that will ultimately be fitted here is
about $775 \Delta t$). The spatial dependence of the scale comes in the form of
the Butterworth-type function $B$, which starts to decay once the altitude $x$,
in units of meters, is above the ABL height, which we denote with $\beta$, and is
itself a model parameter.  Due to the structure we see across several days of
data, we prefer this representation for the spatial dependence on scale to
knot-based function approximation.

To introduce spatial nonstationarity of components in the marginal spectrum and
coherence while maintaining interpretability, we implement a very simple
space-dependent parameter mapping given by, for example with the
parameter $\rho$,
\begin{equation*} 
  \rho(x) := (1-w(x)) \rho_{0} + w(x) \rho_{1},
\end{equation*}
where
\begin{equation} \label{eq:parm_interp}
  w(x) := (1+e^{-\tau (x - \beta)})^{-1},
\end{equation}
and the subscripts $0$ and $1$ denote in this case the regimes of being below
and above the ABL respectively, so that effectively the spatially-indexed
parameters are logistic interpolants between two values.  The inverse scale
parameter $\tau$ of this logistic function is informative about the sharpness of
the transition from below-ABL and above-ABL dynamics, and is
very important to this model. For one reason, fixing a constant ABL height
$\beta$ for $15$ minutes is obviously not correct from a brief inspection of
Figure \ref{fig:data_hmplot}. Since that is a form of temporal nonstationarity
that this model cannot easily account for, we introduce this $\tau$ parameter to
at least allow the model the ability to communicate about the sharpness of this
transition and capture better the dynamics at range gates near the mean height
of the ABL.

Equipped with this parametric form of spatial dependence, we define the marginal
spectral densities with 
\begin{equation} \label{eq:sdf} 
  S_x(f) = (1+B(\sin \pi f \; ; \; \xi_0(x), \xi_1, \xi_2)) (e^{\rho(x)} \sin^2 \pi f + 1)^{-\nu(x) - 1/2},
\end{equation}
where the spatial dependence of the low frequency multiplier $1+B$ 
comes in the form of the parameter $\xi_0(x)$, and $\xi_1$ and $\xi_2$ are fixed
across space. As a small technical note, for time series measured at non-unit
time steps, a change of variables in the Fourier transform suggests that a
$\Delta t$ rescaling should be applied to $S_x$. We do not do that here and
instead elect to simply treat the time sampling as unit sampling and scale
nonparametric spectral density estimators consistently with that in the
following section, implicitly absorbing this $\Delta t \approx 1.2$s into the
$\theta_j$ parameters. 

This spectral density corresponds closely to a filtered Mat\'ern spectral
density for continuously indexed processes, which is best characterized by its
algebraic decay, so that $S(\bm{f}) \sim \norm[2]{\bm{f}}^{-2 \nu - d}$ in $d$
dimensions.  In discrete time, $\nu$ is not so directly interpretable since even
negative values yield spectral densities corresponding to non-pathological
processes. But the algebraic tail decay is nonetheless convenient and at least
heuristically informative about smoothness that would be implied by a continuous
version of the model, and using $\sin \pi f$ instead of $f$ extends this
functional form nicely to a function that is smooth on the unit torus. The extra
multiplier of $1+B$ provides the model the ability to add extra low frequency
energy without affecting the higher frequencies, which is a pronounced feature
of the data and can be seen in the diagnostic figures of the next section.

The coherence function $C_f$ is crucial to this model, and is extremely
different above and below the ABL. To be as flexible as possible, we choose
$C_f$ to be a nonstationary Paciorek-Schervish-type correlation function with an
extension to variable smoothness \citep{stein2005} given by
\begin{equation} \label{eq:Cf}
  C_f(x,x') := \frac{\sqrt[4]{\gamma(x) \gamma(x')}}{\gamma(x,x')} 
  \mathcal{M}_{(\nu_s(x)+\nu_s(x'))/2} \left( \frac{|x-x'|}{\gamma(x,x')} \right),
\end{equation}
where $\gamma(x,x') := \sqrt{(\gamma(x) + \gamma(x))/2}$, and we parameterize 
\begin{equation*} 
  \gamma(x)(f) := B(\sin \pi f; \; \zeta_0(x), \zeta_1(x), \zeta_2(x)),
\end{equation*}
using $\zeta$ instead of $\xi$ to distinguish between the parameters of $B(f)$
for $S_x$ and $C_f$. Unfortunately the spatially indexed smoothnesses for $C_f$
are not smoothly interpolated like other spatially indexed parameters in order
to avoid taking derivatives of second-kind Bessel functions (refer to the
Discussion section for more information) in the derivatives of $C_f$, and are
instead discontinuously parameterized as
\begin{equation*} 
  \nu_s(x) = \mathbf{1}\set{x \leq \beta}\frac{1}{2} + (1-\mathbf{1}\set{x \leq
  \beta})\frac{3}{4}.
\end{equation*}
We acknowledge that it is somewhat noteworthy to have fixed smoothnesses for
this correlation function. Considering that there are only approximately $20$ spatial
locations being considered in this model, however, our experimentation indicates
that the effect of varying these parameters on the likelihood is modest.

Since the true phase dynamics of this process are almost certainly driven by
nonlinearity that no Gaussian process model could capture, we include only the
simplest form of $g(f)$ in this model, setting $g(f) := \alpha \sin \pi f$, with
no spatial variation. With the inclusion of our space-time and temporal nuggets,
we now have a complete covariance function of the form (\ref{eq:st_sit}).  Table
$1$ provides a final summary and reference for the interpretation of each
parameter in the model.

\begin{table}[!ht]
  \centering
  \begin{tabular}{|c|c|}
    \hline
    $\beta$&Boundary layer height\\
$\tau$&Shape parameter for boundary layer regime transition rate\\
$\theta_{t_j}$&Local scale parameters\\
$\phi_{j}$&Scale decay with altitude over boundary layer\\
$\xi_j(x)$&Space-dependent shape parameters for temporal low-frequency multipler $1+B(f)$\\
$\rho(x)$&Space-dependent temporal range-type parameter\\
$\nu(x)$&Space-dependent temporal smoothness-type parameter\\
$\zeta_j(x)$&Space-dependent shape parameters for coherence function $\gamma(x)$\\
$\alpha$&Phase function rate-type parameter\\
$\eta_{st}$&Spatio-temporal nugget size\\
$\eta_{t}$&Purely temporal nugget size\\
 
    \hline
  \end{tabular}
  \label{tab:parms_summary}
  \caption{A summary of all the parameters used in the covariance model as
  parameterized in (\ref{eq:st_sit}).}
\end{table}

With this model fully introduced, it merits comment that since it will be fitted
with maximum likelihood estimators, parameters informing the high frequency
behavior will be highly priotized over parameters that inform low frequency
behavior. Because of the human tendency to focus on lower frequency behavior
when performing visual inspection of simulations or other diagnostics, however,
we have included several parameters and degrees of freedom here that do not
heavily affect the likelihood. The $\set{\xi(x)}$ parameters, for example, are a
good example of extra degrees of freedom that do not strongly affect the
likelihood and are completely confined to the very lowest of frequencies. Their
effect on visual diagnostics for conditional simulations, however, are very
noticeable.  For certain purposes, a smaller model could be used with similarly
satisfying results, and the supplemental information compares visual diagnostics
of the full model shown here and a reduced one. Since this work is an
investigation into how well a Gaussian process model can capture the very
complex dynamics of vertical wind velocity, however, we proceed with a
maximalist approach here.

\section{Estimation Results}

Using the methods described at the end of section $2$, it is possible to
evaluate the likelihood and its derivatives entirely in the space-time domain.
In this section, we discuss the results of maximum likelihood estimation for the
above model to six different days of data in June $2015$ that were judged to
have comparable meteorological conditions. In each case, we performed the
optimization over the $25$ parameters using a hand-written trust region method
adapted directly from \cite{nocedal2006}. In order to only use first derivatives
of the covariance function with respect to parameters, we used a symmetrized
stochastic estimator for the expected Fisher information matrix
\citep{geoga2019} as an approximation for the Hessian (observed information). To
further avoid costly matrix-matrix operations, we also used a symmetrized stochastic
gradient.  Optimization was terminated at a relative precision of $10^{-6}$ in
the objective function. 

To enable unconstrained optimization, the model was fitted with transformed
variables, as many need to be positive. This was necessary due to the stochastic
derivative estimators, as several parameters being reasonably close to zero
means that even a very accurate stochastic gradient and expected Fisher matrix
can cause domain issues.  All point estimates and standard deviations provided in
this section, however, are provided for the model as parameterized above,
obtained by un-transforming the computed MLE and evaluating the likelihood and
its derivatives once in that domain.  This is not ideal considering that
high-quality convergence criteria such as the size of the Newton step change in
response to a nonlinear transformation of the variables, but we do not believe
this issue to be overly invalidating in the interpretation of the estimates or
their associated uncertainty.  The code used to perform the estimation is spread
out over the software packages
\texttt{HalfSpectral.jl}\footnote{\texttt{https://git.sr.ht/\textasciitilde
cgeoga/HalfSpectral.jl}} for the half-spectral kernel function and
\texttt{GPMaxlik.jl}\footnote{\texttt{https://git.sr.ht/\textasciitilde
cgeoga/GPMaxlik.jl}} for efficient optimization of the likelihood, both written
in the Julia programming language \citep{bezanson2017}.  Example files in
\texttt{HalfSpectral.jl} contain some specific scripts used to obtain the
estimates below.

As described in Section $3$, we restrict the estimation here to approximately
$14$ minute time segments (about $780$ measurements) and range gates between
$\SI{210}{\meter}$ (the seventh gate) and either $\SI{690}{\meter}$ (the $23$rd
gate) or $\SI{840}{\meter}$ (the $28$th gate) based on the height of the ABL so
that both above- and below-mixing height dynamics have reasonable
representation.  These time segements correspond to measurements between the
gaps noted previously, and the resulting data sizes are near the limit of the
computational abilities of the hardware used for the estimation.  We do stress,
however, that the methods described here could equally well be applied to gappy
regular monitoring data by virtue of the fact that performing the computations
in the time domain means that an FFT is being applied to the analytical
spectrum, not the data itself.

\begin{table}[!ht] 
 \centering
 \begin{adjustbox}{width=\textwidth}
 \begin{tabular}{|c|c|c|c|c|c|c|}
   \hline
   &02&03&06&20&24&28\\

   \hline
   $\theta_{t_0}$&\num[scientific-notation=false,round-mode=places,round-precision=2]{1.0556262087602715} (\num[scientific-notation=false,round-mode=places,round-precision=2]{0.02642429876657453})&\num[scientific-notation=false,round-mode=places,round-precision=2]{0.7229962513558726} (\num[scientific-notation=false,round-mode=places,round-precision=2]{0.08892856293646752})&\num[scientific-notation=false,round-mode=places,round-precision=2]{1.0301866449348027} (\num[scientific-notation=false,round-mode=places,round-precision=2]{0.03298906851974212})&\num[scientific-notation=false,round-mode=places,round-precision=2]{1.1271223866528852} (\num[scientific-notation=false,round-mode=places,round-precision=2]{0.02309972054915364})&\num[scientific-notation=false,round-mode=places,round-precision=2]{1.6509842401806383} (\num[scientific-notation=false,round-mode=places,round-precision=2]{0.02840315863216571})&\num[scientific-notation=false,round-mode=places,round-precision=2]{1.5435982654353244} (\num[scientific-notation=false,round-mode=places,round-precision=2]{0.03623436809530996})\\
$\theta_{t_2}$&\num[scientific-notation=false,round-mode=places,round-precision=2]{1.1816083418260148} (\num[scientific-notation=false,round-mode=places,round-precision=2]{0.027966037586357417})&\num[scientific-notation=false,round-mode=places,round-precision=2]{0.6169231208893362} (\num[scientific-notation=false,round-mode=places,round-precision=2]{0.08960036456210184})&\num[scientific-notation=false,round-mode=places,round-precision=2]{1.2259183038185775} (\num[scientific-notation=false,round-mode=places,round-precision=2]{0.034777977360913565})&\num[scientific-notation=false,round-mode=places,round-precision=2]{0.9177684976167811} (\num[scientific-notation=false,round-mode=places,round-precision=2]{0.025601825905434982})&\num[scientific-notation=false,round-mode=places,round-precision=2]{1.7803151762076286} (\num[scientific-notation=false,round-mode=places,round-precision=2]{0.029463409312946054})&\num[scientific-notation=false,round-mode=places,round-precision=2]{1.2326691445678453} (\num[scientific-notation=false,round-mode=places,round-precision=2]{0.03818865778752037})\\
$\theta_{t_3}$&\num[scientific-notation=false,round-mode=places,round-precision=2]{1.2660682932086842} (\num[scientific-notation=false,round-mode=places,round-precision=2]{0.027916400371029864})&\num[scientific-notation=false,round-mode=places,round-precision=2]{0.6355976308042246} (\num[scientific-notation=false,round-mode=places,round-precision=2]{0.089550349423295})&\num[scientific-notation=false,round-mode=places,round-precision=2]{1.6267502035230752} (\num[scientific-notation=false,round-mode=places,round-precision=2]{0.03438179985239311})&\num[scientific-notation=false,round-mode=places,round-precision=2]{1.1613792516530297} (\num[scientific-notation=false,round-mode=places,round-precision=2]{0.024741167106438887})&\num[scientific-notation=false,round-mode=places,round-precision=2]{1.8587772575635308} (\num[scientific-notation=false,round-mode=places,round-precision=2]{0.029517967936912044})&\num[scientific-notation=false,round-mode=places,round-precision=2]{1.2523659501669144} (\num[scientific-notation=false,round-mode=places,round-precision=2]{0.03791652900510134})\\
$\theta_{t_4}$&\num[scientific-notation=false,round-mode=places,round-precision=2]{1.1896106548426282} (\num[scientific-notation=false,round-mode=places,round-precision=2]{0.026263241978222388})&\num[scientific-notation=false,round-mode=places,round-precision=2]{0.8245483472049486} (\num[scientific-notation=false,round-mode=places,round-precision=2]{0.08891335566182942})&\num[scientific-notation=false,round-mode=places,round-precision=2]{1.9798337925416563} (\num[scientific-notation=false,round-mode=places,round-precision=2]{0.03261413678690081})&\num[scientific-notation=false,round-mode=places,round-precision=2]{1.2640619496159615} (\num[scientific-notation=false,round-mode=places,round-precision=2]{0.02295996838594376})&\num[scientific-notation=false,round-mode=places,round-precision=2]{2.2641401237086787} (\num[scientific-notation=false,round-mode=places,round-precision=2]{0.028220857609009838})&\num[scientific-notation=false,round-mode=places,round-precision=2]{0.8741260287419821} (\num[scientific-notation=false,round-mode=places,round-precision=2]{0.03680445030038951})\\
$\rho_{0}$&\num[scientific-notation=false,round-mode=places,round-precision=2]{2.7184535523514923} (\num[scientific-notation=false,round-mode=places,round-precision=2]{0.14064359526562117})&\num[scientific-notation=false,round-mode=places,round-precision=2]{0.3523791436292875} (\num[scientific-notation=false,round-mode=places,round-precision=2]{0.4968336055967104})&\num[scientific-notation=false,round-mode=places,round-precision=2]{3.4769355585614514} (\num[scientific-notation=false,round-mode=places,round-precision=2]{0.14465230003381357})&\num[scientific-notation=false,round-mode=places,round-precision=2]{2.368616665065095} (\num[scientific-notation=false,round-mode=places,round-precision=2]{0.11420809480531673})&\num[scientific-notation=false,round-mode=places,round-precision=2]{2.7576042937339356} (\num[scientific-notation=false,round-mode=places,round-precision=2]{0.13367068150340966})&\num[scientific-notation=false,round-mode=places,round-precision=2]{2.863413789112529} (\num[scientific-notation=false,round-mode=places,round-precision=2]{0.15714562596727327})\\
$\nu_{0}$&\num[scientific-notation=false,round-mode=places,round-precision=2]{1.1436376841231368} (\num[scientific-notation=false,round-mode=places,round-precision=2]{0.058287478028003215})&\num[scientific-notation=false,round-mode=places,round-precision=2]{7.0710367001363315} (\num[scientific-notation=false,round-mode=places,round-precision=2]{2.534591305764318})&\num[scientific-notation=false,round-mode=places,round-precision=2]{1.0230752171559516} (\num[scientific-notation=false,round-mode=places,round-precision=2]{0.0418826624881223})&\num[scientific-notation=false,round-mode=places,round-precision=2]{1.427851074192237} (\num[scientific-notation=false,round-mode=places,round-precision=2]{0.06564508178499827})&\num[scientific-notation=false,round-mode=places,round-precision=2]{1.2108339448267966} (\num[scientific-notation=false,round-mode=places,round-precision=2]{0.054117649305205906})&\num[scientific-notation=false,round-mode=places,round-precision=2]{1.345209228510582} (\num[scientific-notation=false,round-mode=places,round-precision=2]{0.06659087531108375})\\
$\rho_{1}$&\num[scientific-notation=false,round-mode=places,round-precision=2]{5.525591539265126} (\num[scientific-notation=false,round-mode=places,round-precision=2]{0.1753826549617934})&\num[scientific-notation=false,round-mode=places,round-precision=2]{2.3107594378019876} (\num[scientific-notation=false,round-mode=places,round-precision=2]{0.5460793137635274})&\num[scientific-notation=false,round-mode=places,round-precision=2]{6.76116370045953} (\num[scientific-notation=false,round-mode=places,round-precision=2]{0.11844637355716209})&\num[scientific-notation=false,round-mode=places,round-precision=2]{8.926909979121648} (\num[scientific-notation=false,round-mode=places,round-precision=2]{0.15126267957360842})&\num[scientific-notation=false,round-mode=places,round-precision=2]{4.885356713012435} (\num[scientific-notation=false,round-mode=places,round-precision=2]{0.17997826670026837})&\num[scientific-notation=false,round-mode=places,round-precision=2]{6.2469586032345426} (\num[scientific-notation=false,round-mode=places,round-precision=2]{0.18848628469733272})\\
$\nu_{1}$&\num[scientific-notation=false,round-mode=places,round-precision=2]{2.139742354992843} (\num[scientific-notation=false,round-mode=places,round-precision=2]{0.17539554669050203})&\num[scientific-notation=false,round-mode=places,round-precision=2]{20.052700039213416} (\num[scientific-notation=false,round-mode=places,round-precision=2]{9.774254200268746})&\num[scientific-notation=false,round-mode=places,round-precision=2]{3.1365648035091573} (\num[scientific-notation=false,round-mode=places,round-precision=2]{0.18744343747793027})&\num[scientific-notation=false,round-mode=places,round-precision=2]{0.371209470709341} (\num[scientific-notation=false,round-mode=places,round-precision=2]{0.025830281901005607})&\num[scientific-notation=false,round-mode=places,round-precision=2]{2.113313519968596} (\num[scientific-notation=false,round-mode=places,round-precision=2]{0.167712388596908})&\num[scientific-notation=false,round-mode=places,round-precision=2]{2.3513540244417315} (\num[scientific-notation=false,round-mode=places,round-precision=2]{0.24894162883358184})\\
$\zeta_{00}$&\num[scientific-notation=false,round-mode=places,round-precision=2]{12.45648190493277} (\num[scientific-notation=false,round-mode=places,round-precision=2]{0.4946364957453865})&\num[scientific-notation=false,round-mode=places,round-precision=2]{13.637335735835578} (\num[scientific-notation=false,round-mode=places,round-precision=2]{0.1771803409234416})&\num[scientific-notation=false,round-mode=places,round-precision=2]{0.496812859113053} (\num[scientific-notation=true,round-mode=figures,round-precision=2,output-exponent-marker=\texttt{e}]{27107.97053909355})&\num[scientific-notation=false,round-mode=places,round-precision=2]{17.465524606902807} (\num[scientific-notation=false,round-mode=places,round-precision=2]{0.8860365223431688})&\num[scientific-notation=false,round-mode=places,round-precision=2]{11.890452625368212} (\num[scientific-notation=false,round-mode=places,round-precision=2]{0.29379470534132784})&\num[scientific-notation=false,round-mode=places,round-precision=2]{11.482418807490845} (\num[scientific-notation=false,round-mode=places,round-precision=2]{0.44248021560214035})\\
$\zeta_{01}$&\num[scientific-notation=false,round-mode=places,round-precision=2]{0.01652013151928776} (\num[scientific-notation=true,round-mode=figures,round-precision=2,output-exponent-marker=\texttt{e}]{0.005620330799788851})&\num[scientific-notation=false,round-mode=places,round-precision=2]{0.0104849924240422} (\num[scientific-notation=true,round-mode=figures,round-precision=2,output-exponent-marker=\texttt{e}]{0.0020984283893744618})&\num[scientific-notation=false,round-mode=places,round-precision=2]{0.03318469941523771} (\num[scientific-notation=true,round-mode=figures,round-precision=2,output-exponent-marker=\texttt{e}]{0.005245187155527688})&\num[scientific-notation=true,round-mode=figures,round-precision=2,output-exponent-marker=\texttt{e}]{0.00020853824085970704} (\num[scientific-notation=true,round-mode=figures,round-precision=2,output-exponent-marker=\texttt{e}]{0.00012030071780367899})&\num[scientific-notation=false,round-mode=places,round-precision=2]{0.07039749552931138} (\num[scientific-notation=false,round-mode=places,round-precision=2]{0.012665293648571704})&\num[scientific-notation=false,round-mode=places,round-precision=2]{0.020008578723901934} (\num[scientific-notation=true,round-mode=figures,round-precision=2,output-exponent-marker=\texttt{e}]{0.0076550916354132725})\\
$\zeta_{02}$&\num[scientific-notation=false,round-mode=places,round-precision=2]{0.8395641019094884} (\num[scientific-notation=false,round-mode=places,round-precision=2]{0.04638496983658248})&\num[scientific-notation=false,round-mode=places,round-precision=2]{0.7714550752618892} (\num[scientific-notation=false,round-mode=places,round-precision=2]{0.04384598242088745})&\num[scientific-notation=false,round-mode=places,round-precision=2]{1.353570734601728} (\num[scientific-notation=false,round-mode=places,round-precision=2]{0.06032062781466546})&\num[scientific-notation=false,round-mode=places,round-precision=2]{0.6817845001137893} (\num[scientific-notation=false,round-mode=places,round-precision=2]{0.034942528971098764})&\num[scientific-notation=false,round-mode=places,round-precision=2]{1.0883851973298733} (\num[scientific-notation=false,round-mode=places,round-precision=2]{0.05552315972402215})&\num[scientific-notation=false,round-mode=places,round-precision=2]{0.7266489288773986} (\num[scientific-notation=false,round-mode=places,round-precision=2]{0.05300958731624841})\\
$\zeta_{10}$&\num[scientific-notation=false,round-mode=places,round-precision=2]{16.79317692040071} (\num[scientific-notation=false,round-mode=places,round-precision=2]{0.5022660109612367})&\num[scientific-notation=false,round-mode=places,round-precision=2]{13.568906732409545} (\num[scientific-notation=false,round-mode=places,round-precision=2]{0.21384306855975357})&\num[scientific-notation=false,round-mode=places,round-precision=2]{18.133182962067444} (\num[scientific-notation=false,round-mode=places,round-precision=2]{0.32919779453613024})&\num[scientific-notation=false,round-mode=places,round-precision=2]{24.81462524005045} (\num[scientific-notation=false,round-mode=places,round-precision=2]{0.843653431181234})&\num[scientific-notation=false,round-mode=places,round-precision=2]{14.952590939136945} (\num[scientific-notation=false,round-mode=places,round-precision=2]{0.3074962905524487})&\num[scientific-notation=false,round-mode=places,round-precision=2]{17.25380228624976} (\num[scientific-notation=false,round-mode=places,round-precision=2]{0.4912372546023238})\\
$\zeta_{11}$&\num[scientific-notation=false,round-mode=places,round-precision=2]{0.011595522882194436} (\num[scientific-notation=true,round-mode=figures,round-precision=2,output-exponent-marker=\texttt{e}]{0.0025110517435301774})&\num[scientific-notation=false,round-mode=places,round-precision=2]{10.657394685525706} (\num[scientific-notation=false,round-mode=places,round-precision=2]{3.524882488558071})&\num[scientific-notation=true,round-mode=figures,round-precision=2,output-exponent-marker=\texttt{e}]{0.0005314875468409665} (\num[scientific-notation=true,round-mode=figures,round-precision=2,output-exponent-marker=\texttt{e}]{0.0001933401239973053})&\num[scientific-notation=true,round-mode=figures,round-precision=2,output-exponent-marker=\texttt{e}]{1.5011577752937066e-5} (\num[scientific-notation=true,round-mode=figures,round-precision=2,output-exponent-marker=\texttt{e}]{1.0028947183795867e-5})&\num[scientific-notation=false,round-mode=places,round-precision=2]{0.0391165513171305} (\num[scientific-notation=true,round-mode=figures,round-precision=2,output-exponent-marker=\texttt{e}]{0.004538829409167363})&\num[scientific-notation=true,round-mode=figures,round-precision=2,output-exponent-marker=\texttt{e}]{0.004385388344358817} (\num[scientific-notation=true,round-mode=figures,round-precision=2,output-exponent-marker=\texttt{e}]{0.001090989900762981})\\
$\zeta_{12}$&\num[scientific-notation=false,round-mode=places,round-precision=2]{1.383669202874107} (\num[scientific-notation=false,round-mode=places,round-precision=2]{0.09423185167686503})&\num[scientific-notation=false,round-mode=places,round-precision=2]{69.02732185122191} (\num[scientific-notation=false,round-mode=places,round-precision=2]{23.142467575824526})&\num[scientific-notation=false,round-mode=places,round-precision=2]{0.6742817767766122} (\num[scientific-notation=false,round-mode=places,round-precision=2]{0.05692442191104085})&\num[scientific-notation=false,round-mode=places,round-precision=2]{1.0226053055396143} (\num[scientific-notation=false,round-mode=places,round-precision=2]{0.04863554493527654})&\num[scientific-notation=false,round-mode=places,round-precision=2]{1.8611592274086597} (\num[scientific-notation=false,round-mode=places,round-precision=2]{0.12400207566148859})&\num[scientific-notation=false,round-mode=places,round-precision=2]{0.970383928988044} (\num[scientific-notation=false,round-mode=places,round-precision=2]{0.0981887348454128})\\
$\beta$&\num[scientific-notation=true,round-mode=figures,round-precision=2,output-exponent-marker=\texttt{e}]{458.841071692587} (\num[scientific-notation=false,round-mode=places,round-precision=2]{2.147502500896736})&\num[scientific-notation=true,round-mode=figures,round-precision=2,output-exponent-marker=\texttt{e}]{659.9881297939746} (\num[scientific-notation=false,round-mode=places,round-precision=2]{4.0990711316415736})&\num[scientific-notation=true,round-mode=figures,round-precision=2,output-exponent-marker=\texttt{e}]{472.4214122571234} (\num[scientific-notation=false,round-mode=places,round-precision=2]{3.181003466565874})&\num[scientific-notation=true,round-mode=figures,round-precision=2,output-exponent-marker=\texttt{e}]{682.399434990027} (\num[scientific-notation=false,round-mode=places,round-precision=2]{1.2329858538476153})&\num[scientific-notation=true,round-mode=figures,round-precision=2,output-exponent-marker=\texttt{e}]{583.8088691710553} (\num[scientific-notation=false,round-mode=places,round-precision=2]{3.4932477947914147})&\num[scientific-notation=true,round-mode=figures,round-precision=2,output-exponent-marker=\texttt{e}]{545.7345172635958} (\num[scientific-notation=false,round-mode=places,round-precision=2]{3.9448430801624337})\\
$\tau$&\num[scientific-notation=false,round-mode=places,round-precision=2]{0.05645016419419213} (\num[scientific-notation=true,round-mode=figures,round-precision=2,output-exponent-marker=\texttt{e}]{0.0020722032977364972})&\num[scientific-notation=false,round-mode=places,round-precision=2]{0.04072698290322612} (\num[scientific-notation=true,round-mode=figures,round-precision=2,output-exponent-marker=\texttt{e}]{0.002356610706978164})&\num[scientific-notation=false,round-mode=places,round-precision=2]{0.02390157373609535} (\num[scientific-notation=true,round-mode=figures,round-precision=2,output-exponent-marker=\texttt{e}]{0.0007762165635489626})&\num[scientific-notation=false,round-mode=places,round-precision=2]{0.02704652196385024} (\num[scientific-notation=true,round-mode=figures,round-precision=2,output-exponent-marker=\texttt{e}]{0.0004470900057160451})&\num[scientific-notation=false,round-mode=places,round-precision=2]{0.03330645678601953} (\num[scientific-notation=true,round-mode=figures,round-precision=2,output-exponent-marker=\texttt{e}]{0.0013159757438945697})&\num[scientific-notation=false,round-mode=places,round-precision=2]{0.03604004978089501} (\num[scientific-notation=true,round-mode=figures,round-precision=2,output-exponent-marker=\texttt{e}]{0.0012775597470298748})\\
$\xi_{00}$&\num[scientific-notation=false,round-mode=places,round-precision=2]{3.578639178296149} (\num[scientific-notation=false,round-mode=places,round-precision=2]{0.8557458067026765})&\num[scientific-notation=false,round-mode=places,round-precision=2]{18.982493267434105} (\num[scientific-notation=false,round-mode=places,round-precision=2]{8.801716889295792})&\num[scientific-notation=false,round-mode=places,round-precision=2]{1.0229125613814773} (\num[scientific-notation=false,round-mode=places,round-precision=2]{0.3887020405885409})&\num[scientific-notation=false,round-mode=places,round-precision=2]{14.940267793718695} (\num[scientific-notation=false,round-mode=places,round-precision=2]{5.071835072114729})&\num[scientific-notation=false,round-mode=places,round-precision=2]{0.4681169435795914} (\num[scientific-notation=false,round-mode=places,round-precision=2]{0.22311036151810143})&\num[scientific-notation=false,round-mode=places,round-precision=2]{37.06966971908233} (\num[scientific-notation=false,round-mode=places,round-precision=2]{22.654546983234997})\\
$\xi_{01}$&\num[scientific-notation=false,round-mode=places,round-precision=2]{7.076909630417635} (\num[scientific-notation=false,round-mode=places,round-precision=2]{1.918324190944924})&\num[scientific-notation=false,round-mode=places,round-precision=2]{9.213539692471493} (\num[scientific-notation=false,round-mode=places,round-precision=2]{4.724728718719655})&\num[scientific-notation=false,round-mode=places,round-precision=2]{0.02006093596881308} (\num[scientific-notation=false,round-mode=places,round-precision=2]{0.23371434288264523})&\num[scientific-notation=false,round-mode=places,round-precision=2]{38.47810754602705} (\num[scientific-notation=false,round-mode=places,round-precision=2]{14.238775076964256})&\num[scientific-notation=false,round-mode=places,round-precision=2]{1.5744938910769006} (\num[scientific-notation=false,round-mode=places,round-precision=2]{0.6097192570126274})&\num[scientific-notation=false,round-mode=places,round-precision=2]{38.305892677745945} (\num[scientific-notation=false,round-mode=places,round-precision=2]{24.499053550449357})\\
$\xi_2$&\num[scientific-notation=false,round-mode=places,round-precision=2]{6.552778376515583} (\num[scientific-notation=false,round-mode=places,round-precision=2]{2.9303540737137266})&\num[scientific-notation=false,round-mode=places,round-precision=2]{0.8519164107207683} (\num[scientific-notation=false,round-mode=places,round-precision=2]{0.1311410352988271})&\num[scientific-notation=false,round-mode=places,round-precision=2]{12.982330294317986} (\num[scientific-notation=false,round-mode=places,round-precision=2]{36.07055887142275})&\num[scientific-notation=false,round-mode=places,round-precision=2]{1.0905482297519487} (\num[scientific-notation=false,round-mode=places,round-precision=2]{0.18257162503432992})&\num[scientific-notation=false,round-mode=places,round-precision=2]{5.987846877884677} (\num[scientific-notation=false,round-mode=places,round-precision=2]{4.896466751271616})&\num[scientific-notation=false,round-mode=places,round-precision=2]{0.8253183869683739} (\num[scientific-notation=false,round-mode=places,round-precision=2]{0.16708388304972935})\\
$\xi_1$&\num[scientific-notation=false,round-mode=places,round-precision=2]{0.0314327352404162} (\num[scientific-notation=true,round-mode=figures,round-precision=2,output-exponent-marker=\texttt{e}]{0.0018565080107486435})&\num[scientific-notation=false,round-mode=places,round-precision=2]{0.04260701303993951} (\num[scientific-notation=true,round-mode=figures,round-precision=2,output-exponent-marker=\texttt{e}]{0.008387984848445816})&\num[scientific-notation=false,round-mode=places,round-precision=2]{0.02025881791140559} (\num[scientific-notation=true,round-mode=figures,round-precision=2,output-exponent-marker=\texttt{e}]{0.002305710282458664})&\num[scientific-notation=true,round-mode=figures,round-precision=2,output-exponent-marker=\texttt{e}]{0.006805909471948028} (\num[scientific-notation=true,round-mode=figures,round-precision=2,output-exponent-marker=\texttt{e}]{0.0020298186824112274})&\num[scientific-notation=false,round-mode=places,round-precision=2]{0.062462804280392456} (\num[scientific-notation=true,round-mode=figures,round-precision=2,output-exponent-marker=\texttt{e}]{0.00561443762827014})&\num[scientific-notation=true,round-mode=figures,round-precision=2,output-exponent-marker=\texttt{e}]{0.0022364749791569144} (\num[scientific-notation=true,round-mode=figures,round-precision=2,output-exponent-marker=\texttt{e}]{0.0013617254695084497})\\
$\phi_1$&\num[scientific-notation=true,round-mode=figures,round-precision=2,output-exponent-marker=\texttt{e}]{106.55630990556314} (\num[scientific-notation=false,round-mode=places,round-precision=2]{17.669285010950023})&\num[scientific-notation=true,round-mode=figures,round-precision=2,output-exponent-marker=\texttt{e}]{163.1642968998744} (\num[scientific-notation=false,round-mode=places,round-precision=2]{17.63095818767095})&\num[scientific-notation=true,round-mode=figures,round-precision=2,output-exponent-marker=\texttt{e}]{225.68095846960998} (\num[scientific-notation=false,round-mode=places,round-precision=2]{48.7304904555172})&\num[scientific-notation=false,round-mode=places,round-precision=2]{4.0975264418465285} (\num[scientific-notation=false,round-mode=places,round-precision=2]{0.9542516102513577})&\num[scientific-notation=true,round-mode=figures,round-precision=2,output-exponent-marker=\texttt{e}]{107.68455072643069} (\num[scientific-notation=false,round-mode=places,round-precision=2]{28.438806242750694})&\num[scientific-notation=true,round-mode=figures,round-precision=2,output-exponent-marker=\texttt{e}]{362.1785813879047} (\num[scientific-notation=false,round-mode=places,round-precision=2]{45.747826330598706})\\
$\phi_2$&\num[scientific-notation=false,round-mode=places,round-precision=2]{0.6359725490737751} (\num[scientific-notation=false,round-mode=places,round-precision=2]{0.07299237803911589})&\num[scientific-notation=false,round-mode=places,round-precision=2]{1.2036340305829685} (\num[scientific-notation=false,round-mode=places,round-precision=2]{0.2727268060080421})&\num[scientific-notation=false,round-mode=places,round-precision=2]{1.033383962894429} (\num[scientific-notation=false,round-mode=places,round-precision=2]{0.32508943032980603})&\num[scientific-notation=false,round-mode=places,round-precision=2]{0.452868700374407} (\num[scientific-notation=false,round-mode=places,round-precision=2]{0.03503680295583386})&\num[scientific-notation=false,round-mode=places,round-precision=2]{0.5076179812105922} (\num[scientific-notation=false,round-mode=places,round-precision=2]{0.05846798765741223})&\num[scientific-notation=false,round-mode=places,round-precision=2]{1.1377237045025745} (\num[scientific-notation=false,round-mode=places,round-precision=2]{0.3935360195858968})\\
$\alpha$&\num[scientific-notation=true,round-mode=figures,round-precision=2,output-exponent-marker=\texttt{e}]{0.00021387614883648138} (\num[scientific-notation=true,round-mode=figures,round-precision=2,output-exponent-marker=\texttt{e}]{0.00030920949006195257})&\num[scientific-notation=true,round-mode=figures,round-precision=2,output-exponent-marker=\texttt{e}]{0.0001207429936049085} (\num[scientific-notation=true,round-mode=figures,round-precision=2,output-exponent-marker=\texttt{e}]{0.00013473074243283237})&\num[scientific-notation=true,round-mode=figures,round-precision=2,output-exponent-marker=\texttt{e}]{0.001827496398862378} (\num[scientific-notation=true,round-mode=figures,round-precision=2,output-exponent-marker=\texttt{e}]{0.0003366719947633128})&\num[scientific-notation=true,round-mode=figures,round-precision=2,output-exponent-marker=\texttt{e}]{0.0006923243909262534} (\num[scientific-notation=true,round-mode=figures,round-precision=2,output-exponent-marker=\texttt{e}]{0.0002410565476796101})&\num[scientific-notation=true,round-mode=figures,round-precision=2,output-exponent-marker=\texttt{e}]{0.00042088392609476504} (\num[scientific-notation=true,round-mode=figures,round-precision=2,output-exponent-marker=\texttt{e}]{0.00019492273176309306})&\num[scientific-notation=true,round-mode=figures,round-precision=2,output-exponent-marker=\texttt{e}]{1.0893738776132052e-5} (\num[scientific-notation=true,round-mode=figures,round-precision=2,output-exponent-marker=\texttt{e}]{0.0002658685601468546})\\
$\eta_{st}$&\num[scientific-notation=false,round-mode=places,round-precision=2]{0.03591593048261951} (\num[scientific-notation=true,round-mode=figures,round-precision=2,output-exponent-marker=\texttt{e}]{0.0003705187581149945})&\num[scientific-notation=false,round-mode=places,round-precision=2]{0.11060785908943425} (\num[scientific-notation=true,round-mode=figures,round-precision=2,output-exponent-marker=\texttt{e}]{0.0010405600227629877})&\num[scientific-notation=false,round-mode=places,round-precision=2]{0.02490159709312209} (\num[scientific-notation=true,round-mode=figures,round-precision=2,output-exponent-marker=\texttt{e}]{0.00022271884176356118})&\num[scientific-notation=false,round-mode=places,round-precision=2]{0.06675946271171274} (\num[scientific-notation=true,round-mode=figures,round-precision=2,output-exponent-marker=\texttt{e}]{0.0007281965183109152})&\num[scientific-notation=false,round-mode=places,round-precision=2]{0.05692452596723097} (\num[scientific-notation=true,round-mode=figures,round-precision=2,output-exponent-marker=\texttt{e}]{0.0006309083435799313})&\num[scientific-notation=false,round-mode=places,round-precision=2]{0.07107696891572479} (\num[scientific-notation=true,round-mode=figures,round-precision=2,output-exponent-marker=\texttt{e}]{0.0005526641131019762})\\
$\eta_{t}$&\num[scientific-notation=false,round-mode=places,round-precision=2]{0.013877115982555983} (\num[scientific-notation=true,round-mode=figures,round-precision=2,output-exponent-marker=\texttt{e}]{0.0006935848262811752})&\num[scientific-notation=false,round-mode=places,round-precision=2]{0.04857649820572244} (\num[scientific-notation=true,round-mode=figures,round-precision=2,output-exponent-marker=\texttt{e}]{0.0020078370520981203})&\num[scientific-notation=true,round-mode=figures,round-precision=2,output-exponent-marker=\texttt{e}]{0.007468919813436987} (\num[scientific-notation=true,round-mode=figures,round-precision=2,output-exponent-marker=\texttt{e}]{0.00043389086002070817})&\num[scientific-notation=false,round-mode=places,round-precision=2]{0.03128861154128622} (\num[scientific-notation=true,round-mode=figures,round-precision=2,output-exponent-marker=\texttt{e}]{0.0013959972649422083})&\num[scientific-notation=false,round-mode=places,round-precision=2]{0.02332918463458537} (\num[scientific-notation=true,round-mode=figures,round-precision=2,output-exponent-marker=\texttt{e}]{0.001103323531151892})&\num[scientific-notation=false,round-mode=places,round-precision=2]{0.028578073495440157} (\num[scientific-notation=true,round-mode=figures,round-precision=2,output-exponent-marker=\texttt{e}]{0.0011378629110166728})\\

   \hline
 \end{tabular}
 \end{adjustbox}
 \caption{Results from maximum likelihood estimation of the six days that were
 studied. Standard deviations obtained via the stochastic expected Fisher matrix
 are provided in parentheses.} 
 \label{tab:mles}
\end{table}

Table \ref{tab:mles} summarizes the point estimates for each
day and provides the implied standard deviation from the expected Fisher matrix.
Using this matrix as a proxy for the precision matrix of the MLE is a nontrivial
approximation, considering that the necessary conditions for that to even hold
asymptotically---for example, the log-likelihood looking quadratic near the
MLE---are likely not met, or at least are not met for this data size.
Nonetheless, as the expected Fisher matrix is the inverse of the Godambe
information matrix for the score equations, it still serves an at least somewhat
informative purpose about the uncertainty of each point estimate
\citep{heyde2008, stein2013}.

One main takeaway from these tables is that the parameters seem to be estimated
reasonably precisely and have believable values, although of interest is the
fact that for most days, there is at least one parameter that is not
well-resolved, and it is not always the same parameter. In some sense we find
this encouraging, as it is likely an unreasonable expectation that for a priori
domain choices every parameter is resolvable across any nontrivial collection of
days due to the domain variability discussed earlier. Robustness to this issue
seems to be a more important and realistic priority than full resolution at
every day and in all conditions. Further, the inconsistencies between which
parameters are not well-resolved serve as at least an indirect indication that
a model much simpler than the one used here would likely sacrifice some degrees
of freedom that maximum likelihood does consider to be valuable in at least some
circumstances.

Perhaps the most interesting of these estimates in its day-to-day variability is
the time-marginal smoothness-type parameter $\nu$. On June $20$th, $\nu$ is
quite low above the ABL, paradoxically indicating a rougher field
above the ABL height than below it. This is surprising, as from both a
heuristic application of meteorological principles and a visual inspection of
the data, one might expect $\nu$ above the ABL to be reasonably
large, and certainly larger than $\nu$ below the ABL. Moreover,
considering that there are two nugget-type terms, it is noteworthy that maximum
likelihood chose to represent the behavior of this field in $\nu$. This
contrasts with the estimates for June $03$, where maximum likelihood did
precisely the opposite: the estimated $\nu$ parameters are very large, but so are the
nugget parameters compared to the other days. The variability of this parameter
is particularly of interest considering that its continuous analog informs very
important properties of the process, such as differentiability.

\begin{figure}[!ht]
  \centering
  \scalebox{0.65}{\input{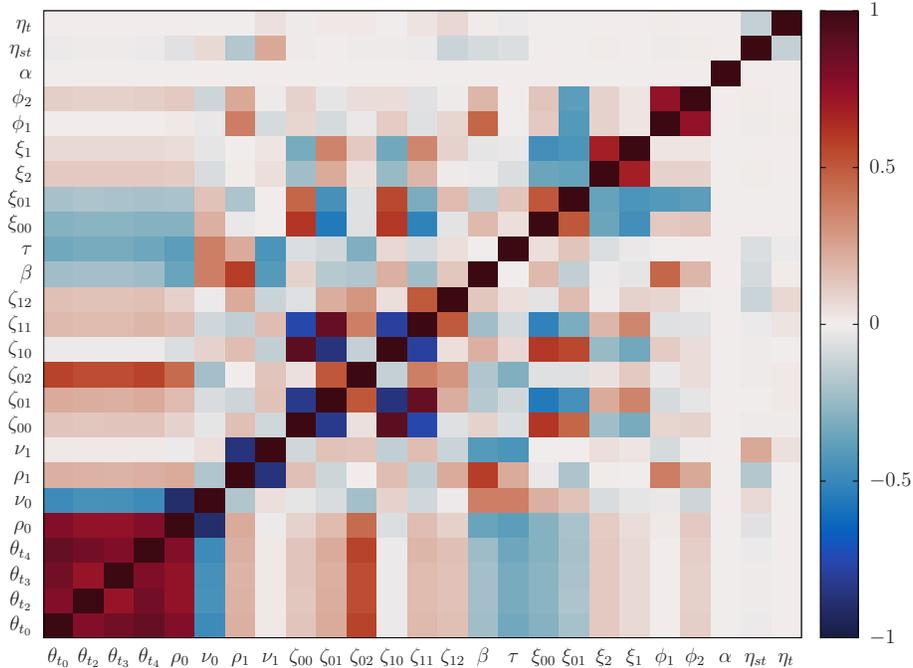}}
  \caption{The correlation matrix of the MLE of the model for the data on June $02$.}
  \label{fig:mle_02_corrhm}
\end{figure}

Several other parameters vary interestingly across days. The parameters
corresponding to marginal spectra do not vary in shocking ways, although the
$\set{\xi}$ parameters---which only inform additional very low frequency
power---appear to be reasonably sensitive to the particular structure of the
fifteen minute segments. While phase parameters often do not strongly influence
the likelihood, it is worth noting that for one of the six
days (June $06$) the MLE suggests a slight but well-estimated phase asymmetry.
Finally, we observe that the coherence parameters for the process above the ABL
are particularly interesting for June $03$. Specifically, the MLE suggests that
the coherence function above the ABL is almost flat at a value quite close to
$1$, suggesting a sort of deterministic process whose randomness comes largely
from the two nugget parameters. While it is obvious that coherence is high at low
frequencies above the ABL, it is noteworthy that for one of the
days---and only one of the days---maximum likelihood selected that type of
structure.

\begin{figure}
  \centering
    \input{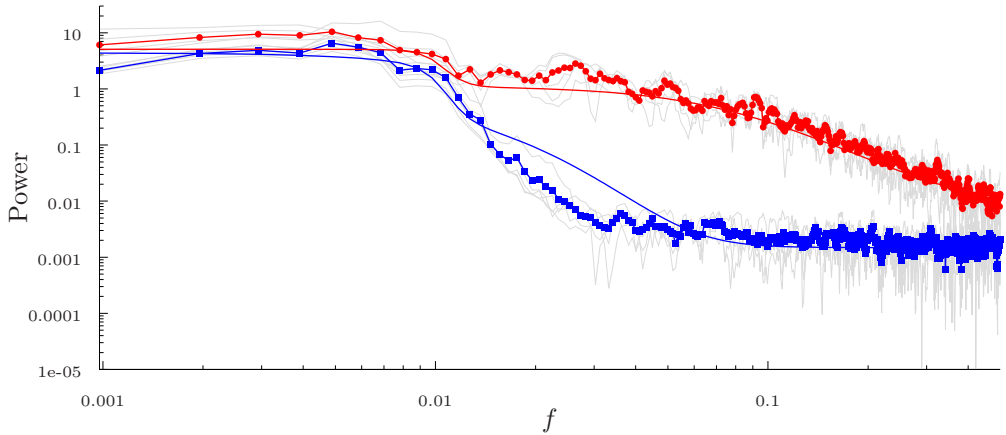}
    \caption{Marginal spectra for gates above the ABL (blue/squares) and
    below the ABL (red/circles) on June $02$.  Analytical estimates for these
    spectra (including nugget parameters) are overlaid as lines of corresponding
    colors. Estimators from individual range gates are shown faintly in grey.} 
    \label{fig:mle_specs}
\end{figure}

Figure \ref{fig:mle_02_corrhm} shows the correlation (not covariance) matrix for
the MLE on June $02$. As one might expect, the parameters determining the
effectively nonparametric scale function are highly correlated and are inversely
correlated with the smoothness, as is the range-type parameter $\rho$.  These
types of correlation structure are quite common in stationary Mat\'ern-type
models.  Moreover, there is strong negative correlation between the latter two
parameters used in the parameterization of $B$, which is the same phenomenon as
above. Of note, though, is that fixing one of those two parameters (or profiling
one in terms of the other) resulted in substantially lower likelihoods and much
more challenging optimization. Similarly, correlation of the $\set{\zeta_j(x)}$
parameters above and below the ABL is reasonably strong, likely due to the way
that they are combined for range gates near the mean ABL height, although that
is obviously another case where reducing the model would be wrong.  Otherwise,
correlations are not concerningly high in our assessment, and this further
suggests that the model is formulated well enough to avoid identifiability
problems despite the high parameter count.

Figures \ref{fig:mle_specs} and \ref{fig:mle_cohs} show the model-implied
marginal spectra and coherences for several specific distance respectively,
where coherences are shown in terms of their real and imaginary parts. The
marginal spectra in general seem well-captured by the model (shown here with the
nugget terms added), and the two-regime nonstationarity seems justifiable
after inspecting the individual gate spectra that are shown faintly in grey.
Maximum likelihood is well-known to prefer capturing high-frequency behavior
as accurately as possible over low-frequency behavior, and this observation in
an earlier stage of the model formulation was a primary motivator for the
extra low-frequency term $(1 + B(f))$ in (\ref{eq:sdf}), which costs a few extra parameters
but substantially improves the model's ability to capture the lowest frequency
behavior.  It can be seen in the modeled spectra above the ABL that the
implied smoothness doesn't agree as well as it could with the nonparametric
estimator.  In the supplemental information that shows every figure for all
six days, this is a noticeable pattern. To some degree, this is most likely
explained by the fact that the likelihood is presumably affected only weakly by
changing $S_x$ above the ABL in that frequency band and making that parameter
slightly incorrect in this marginal sense provided a greater improvement in
some other aspect of capturing covariance structure.

\begin{figure}
  \centering
    \input{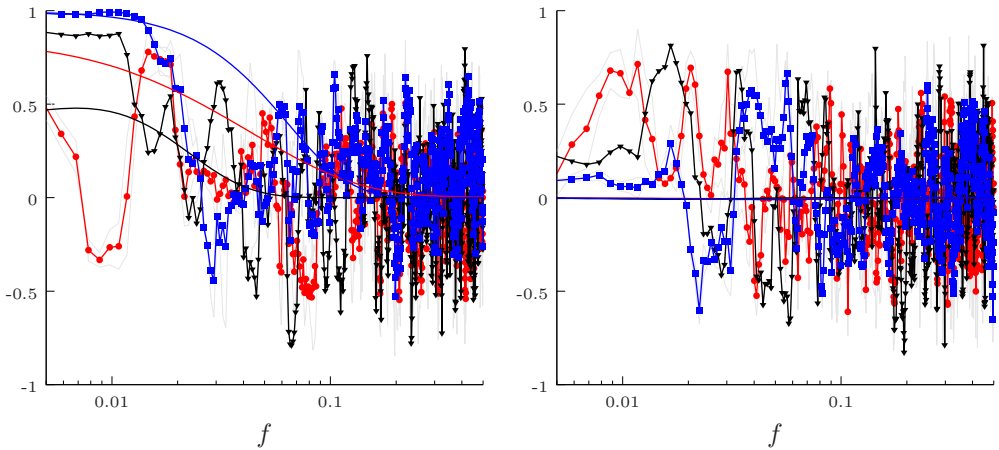}
    \caption{Average real (left) and imaginary (right) parts of the complex
      coherence between measurements three range gates apart above the ABL
      (blue/squares), below it (red/circles), and twelve gates apart across it
      (black/triangles), with individual estimators in faint grey.
    Individual estimates were obtained using multitaper estimators with five
    sine tapers.  Analytical estimates of each part from the MLE are overlaid in
    lines of the corresponding color.}
  \label{fig:mle_cohs}
\end{figure}

For measurements at this frequency and a reasonably small time segment,
coherences are unfortunately very difficult to estimate without either very
strong and potentially confounding bias or intractable variability. We attempt
to strike a balance between the two in Figure \ref{fig:mle_cohs}, although the
interpretation is still difficult. With frequency shown on the log scale, one
may be tempted to interpret the imaginary part of the coherence as meaningful,
it is quite variable and so it is unsurprising that the model in general
suggests that most of the energy of the coherence is in the real part. 

\begin{sidewaysfigure}
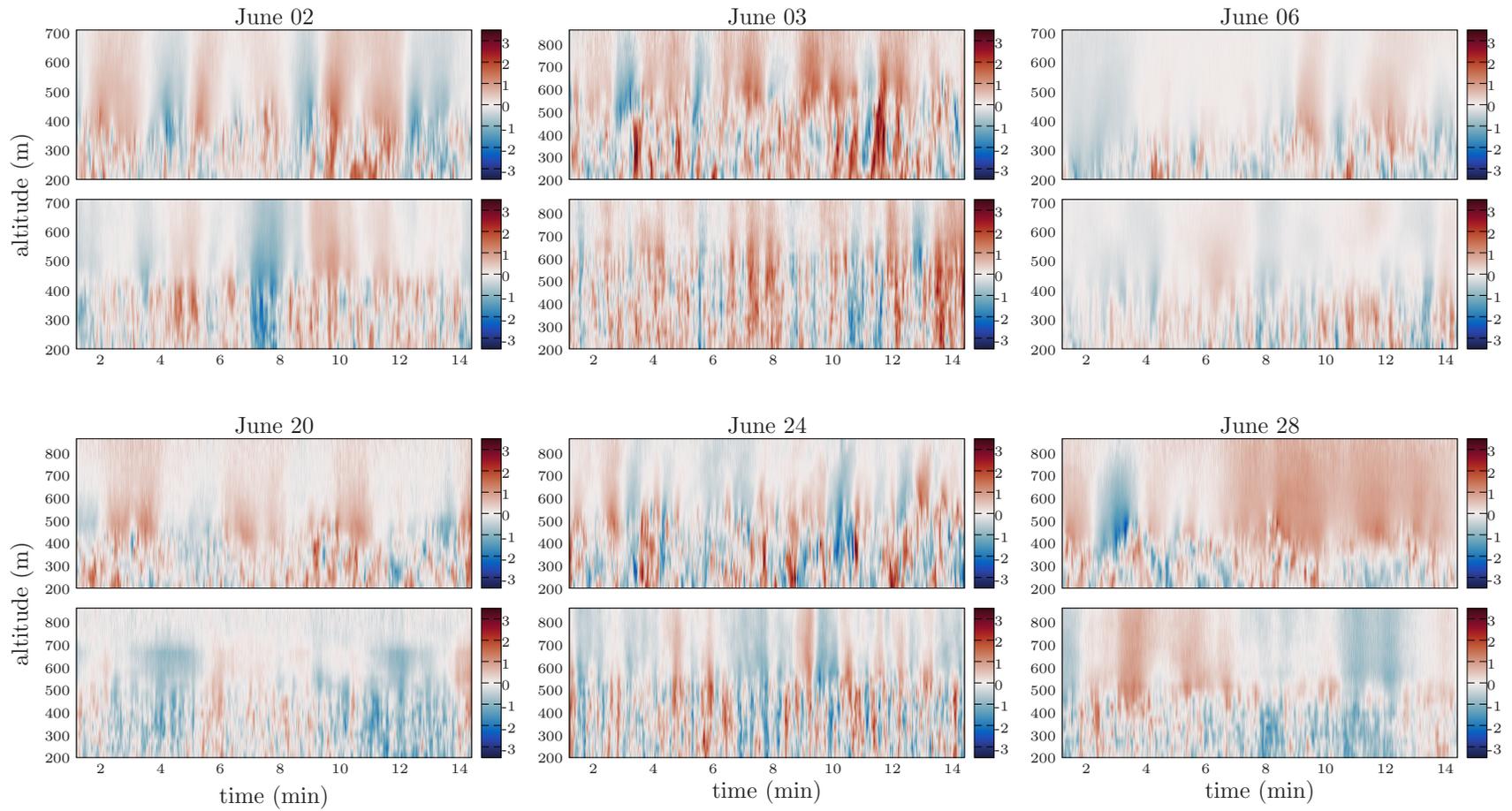

  \centering
  \begin{tabular}{ccc}
    \input{compare_02_hm.tex} & \input{compare_03_hm.tex} & \input{compare_06_hm.tex} \\
    \input{compare_20_hm.tex} & \input{compare_24_hm.tex} & \input{compare_28_hm.tex}
  \end{tabular}
  \caption{Side-by-side comparisons of the data used for estimation (above) and
  a simulation from the GP model with estimated parameters (below), all in units
  of m/s, for each fitted day.}
  \label{fig:alldays_sim_compare}
\end{sidewaysfigure}

Bearing in mind that the lowest frequencies are difficult to estimate and so the
nonparametric coherence estimators themselves may be suspect, the results
suggest that the model is not perfectly specified, in particular above the
ABL and in the form of the spatial dependence.  In general, the
coherence modeling here is more difficult than in \cite{stein2005b}, for
example, because of the very high rate of measurements. Noting
that these plots again show frequency on the log scale, we see that the coherence
estimates are not discernibly different from noise at even relatively low
frequencies, and so assessing the quality of the estimated coherence is
challening. A slightly more heuristic interpretation of the parametric
estimators indicates that the results are plausible, however, in that the MLE
does suggest near-unit coherence at low-frequencies above the ABL. 
While it is tempting to claim that a more complex parametric form might
represent these coherences better, inspection of similar figures for other days
in the supplementary material suggest that much of what one might interpret to
be structure in these estimates does not consistently appear across days, and a
more complex form for the coherence might result in overfitting that makes
multi-day extensions more challenging, rather than less so.

Finally, Figure \ref{fig:alldays_sim_compare} provides side-by-side
cross-section plots (reminiscent of Figure \ref{fig:data_hmplot}) of the
measurement data and a simulation from the fitted model for each of the six
days. While Table \ref{tab:mles} is valuable in assessing the quality of the
estimates, for a model of this complexity we also believe that assessing the
plausibility of simulations is important. In general, we find the simulations to
be satisfying in that regard, although several model mis-specifications are
uniquely visible in this format.  Interestingly, having a fixed estimator for
the ABL height $\beta$ and a smearing-type parameter $\tau$ does seem to
effectively create simulations where $\beta$ does not look fixed so long as
$\tau$ is relatively small. It does, however, highlight that a fixed $\beta$
means that the estimate will always be near the ABL's highest point in the data.
This makes sense considering that maximum likelihood will certainly penalize
under-estimation of uncertainty more harshly than over-estimation. (For example,
the likelihood of a $N(0, \sigma=100)$ variable being close to zero is much
larger than a $N(0,\sigma=1)$ being close to $100$). The extra variability below
the ABL means that the estimate for $\beta$ on June $28$th, for example, is off
by as much as $\SI{100}{\meter}$ from what one might obtain via more
sophisticated and targeted methods, for example those of \cite{tucker2009}.

In summary, this model does reasonably well in capturing both quantitative and
qualitative features of the process's covariance structure in the marginal
spectra. Moreover, visual assessment of the simulations suggests that at least
qualitative features of the coherence are reasonably well-captured. Considering
that the quality of the fits and simulations seems reasonably high across all
days, we find this to be generally encouraging. See the supplemental material
for analogs of all figures shown here for all six of the days studied.

\section{Discussion}

We have presented here a very flexible modeling framework for regularly measured
but nonstationary data. This method provides a convenient and powerful way to
think almost exclusively about single-dimension marginal properties of random
fields (like spectra and coherences) and obtain fully nonseparable, asymmetric,
and nonstationary covariance functions. We are not aware of other methods that
are sufficiently flexible that one can specify a $25$-parameter model that
composes two different and convenient representations of nonstationarity and for
which true maximum likelihood estimation can be performed. On a slightly less
significant note, we point out that this method is also uniquely capable of
providing estimates for the smoothness $\nu$ and its uncertainty. Derivatives of
the modified second-kind Bessel function $\mathcal{K}_{\nu}$ with respect to the
smoothness $\nu$ do exist, and there are series expansions that theoretically
facilitate computing them \citep{nist}, but we are not aware of any code to do
that, and in our own experimentation it is a daunting numerical challenge.  In
the Fourier domain, however, the derivatives are straightforward. Because of
this, the derivative of covariance matrices with respect to $\nu$ are obtained
completely analytically, and so there is no need to be concerned with derivative
approximations like finite differencing for the kernel function or covariance
matrix with respect to $\nu$.

With that said, however, this method of course has drawbacks and is somewhat
limited in the way that it can incorporate temporal nonstationarity (or, more
generally, nonstationarity in whatever marginal dimension is being transformed
over). Further, while some forms of irregularity in time can be mitigated with
proper padding and FFT-based interpolation, complete data irregularity---like
randomly sampled points---would make this covariance model much more onerous to
evaluate. In that sense, we see this method as being complementary to other
methods that are more robust in that respect, such as deformation methods
\citep{anderes2008} or convolution-based methods \citep{higdon2002,
paciorek2006}.

The data introduced here---and high-frequency vertical measurements of
meteorological processes in general, of which Doppler LIDAR measurements are but
one of many---provide a unique opportunity to study space-time random fields.
Unlike the well-studied Irish wind data \citep{haslett1989, gneiting2002,
stein2005, deluna2005, horrell2017}, this data source provides signed
measurements at such a high rate in both space and time that it is arguably a
different regime in both the relevant atmospheric physics and the relevant
statistical concerns.  The nonstationarity in altitude---a rarely-considered
spatial dimension that is as different from the other two as they all are from
time---is sufficiently sharp that there is strong dependence at two altitudes with
obviously different marginal behaviors, and thus provides an exceptional testing
ground for nonstationary models where they are the most necessary.

With regard to the estimation procedure, it merits consideration that the
analytical estimates of the marginal quantities via the MLE are quite different
than those obtained via approximate likelihoods, and almost necessarily will
look worse visually. In many spectral domain likelihood approximations, one is
effectively performing estimation via nonlinear least squares, which is a much
more direct attempt at curve fitting. Maximum likelihood, on the other hand,
will absolutely sacrifice the very low frequency representation of marginal spectra,
for example, if changing such a parameter would increase the likelihood at
higher spatio-temporal frequencies. As such, while one should of course expect
the estimated quantities $S_x$ and $C_f$ to be reasonably reflective of marginal
spectra and coherences, the estimation procedure itself is not attempting to
compartmentalize those quantities or provide the best visual fit of a parametric
curve to a nonparametric estimator.  In this sense, we find the visual agreement
of Figures \ref{fig:mle_specs} and \ref{fig:mle_cohs} to be satisfying, even if
a simpler curve-fitting estimation method could likely provide a more impressive
visualization.

A major objective of this work was to attempt to write a parametric model that
can do a reasonably good job of representing the covariance structure on
different days so that eventually those models may be stitched together to form
one that is cyclostationary in time on the timescale of weeks.  A myriad of
other concerns become relevant at that timescale; rain, clouds, and even
migratory birds and insect swarms \citep{muradyan2020} can introduce confounding
structure that this model is not adequately equipped to capture.  Nonetheless,
however, we are encouraged by what has been achieved here, and are hopeful that,
going forward, modifications for cyclostationarity will be made possible. One
notable concern from the estimation even across just June $2015$, however, is
that the day-to-day variation seems high enough that a purely periodic model for
many of the parameters (for example $\nu_1$) would likely miss a lot of
subtle features of the data that are arguably the most interesting, and perhaps
most informative about the micrometeorology of the area. A Bayesian perspective
may be more appropriate.

More generally, a primary objective for scientists when studying this data is to
provide an estimator for the ABL height, which we have denoted as $\beta$.  Many
works in the meteorological literature provide time series-based methods for
estimating this quantity---for example \cite{tucker2009}---but these methods
often do not fully exploit spatio-temporal structure, work on a coarser time
scale to provide estimates, and assimilate estimators from multiple data
sources.  While those estimators are obviously better than the estimate for
$\beta$ that is obtained via fitting the model described here, we believe that
this work demonstrates that it should be possible to bridge the gap between what
the meteorological community does to estimate this quantity and what the
spatio-temporal statistics community might do. A Bayesian approach that puts a
Mat\'ern-plus-drift prior on the time-varying process $ \beta(t)$, for example,
might reasonably provide meaningful posterior estimates for the ABL height on
the time scale of \emph{seconds}, as opposed to the time scale of $15$ minutes
in \cite{tucker2009}. Moreover, by virtue of being a fully spatio-temporal
model, such a procedure could exploit much more structure in the data while also
mitigating some potentially non-physical artifacts that occur from estimators
that primarily exploit only time dependence, such as large spatial
discontinuities. We believe that these and other similar questions are exigent
and will likely be fruitful topics of future work.

\renewcommand{\abstractname}{Acknowledgements}
\begin{abstract}
  \noindent The authors would like to thank Paytsar Muradyan for her invaluable
  support in understanding and working with the Doppler LIDAR data and for
  her guidance in the writing of Section $3$. They are also grateful to Lydia
  Zoells for her careful copy editing.

  \smallskip

  \noindent This material was based upon work supported by the U.S. Department
  of Energy, Office of Science, Office of Advanced Scientific Computing Research
  (ASCR) under Contracts DE-AC02-06CH11347 and DE-AC02-06CH11357. We acknowledge
  partial NSF funding through awards FP061151-01-PR and CNS-1545046 to MA. 
\end{abstract}

\bibliography{references}

\vspace{-0.15cm}
\begin{flushright}
  \scriptsize \framebox{\parbox{2.5in}{Government License: The submitted
      manuscript has been created by UChicago Argonne, LLC, Operator of Argonne
      National Laboratory (``Argonne").  Argonne, a U.S.  Department of Energy
      Office of Science laboratory, is operated under Contract No.
      DE-AC02-06CH11357.  The U.S. Government retains for itself, and others
      acting on its behalf, a paid-up nonexclusive, irrevocable worldwide
      license in said article to reproduce, prepare derivative works, distribute
      copies to the public, and perform publicly and display publicly, by or on
      behalf of the Government. The Department of Energy will provide public
      access to these results of federally sponsored research in accordance with
  the DOE Public Access Plan.
  http://energy.gov/downloads/doe-public-access-plan. }} 
  \normalsize
\end{flushright}

\newpage
\appendix

\section*{Supplementary Material}

\begin{figure}[!ht]
  \centering
  \input{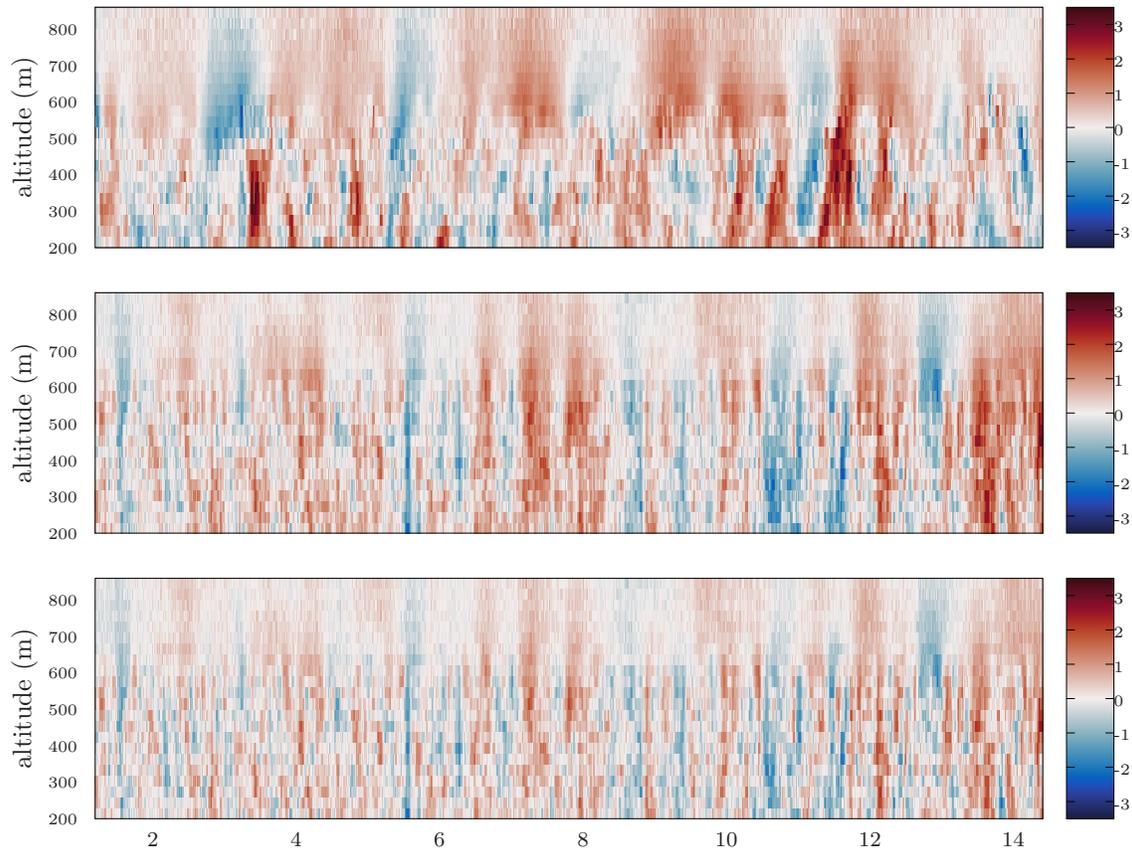}
  \caption{From top to bottom: The measurements used for fitting on June $03$, a
    simulation from the full model using fitted parameters, and a simulation
    from using fitted parameters of a subsetted model with the $\set{\xi}$
    parameters and phase parameter $\alpha$ removed, both using the same random
    seed. The difference in log-likelihood for all $\approx 17000$ points was only
    $183$ units, which is a very small per-observation difference. But as can be
    seen, the difference in visual quality of the simulations is substantial.}
  \label{fig:smallcomp}
\end{figure}

\newpage

\begin{figure}[!h] 
  \centering
    \scalebox{0.7}{\input{mle_02_14_corrhm.tex}}
    \input{mle_specs_02_14.tex}
    \input{mle_cohs_02_14.tex}
    \caption{Same as Figures \ref{fig:mle_02_corrhm}, \ref{fig:mle_specs}, 
    and \ref{fig:mle_cohs} respectively, for June $02$.}
\end{figure}

\newpage

\begin{figure}
  \centering
    \scalebox{0.7}{\input{mle_03_14_corrhm.tex}}
    \input{mle_specs_03_14.tex}
    \input{mle_cohs_03_14.tex}
    \caption{Same as Figures \ref{fig:mle_02_corrhm}, \ref{fig:mle_specs}, and
    \ref{fig:mle_cohs} respectively, for June $03$.}
\end{figure}

\newpage

\begin{figure}
  \centering
    \scalebox{0.7}{\input{mle_06_14_corrhm.tex}}
    \input{mle_specs_06_14.tex}
    \input{mle_cohs_06_14.tex}
    \caption{Same as Figures \ref{fig:mle_02_corrhm}, \ref{fig:mle_specs}, 
    and \ref{fig:mle_cohs} respectively, for June $06$.}
\end{figure}

\newpage

\begin{figure}
  \centering
    \scalebox{0.7}{\input{mle_20_14_corrhm.tex}}
    \input{mle_specs_20_14.tex}
    \input{mle_cohs_20_14.tex}
    \caption{Same as Figures \ref{fig:mle_02_corrhm}, \ref{fig:mle_specs}, and
      \ref{fig:mle_cohs} respectively, for June $20$.}
\end{figure}

\newpage

\begin{figure}
  \centering
    \scalebox{0.7}{\input{mle_24_14_corrhm.tex}}
    \input{mle_specs_24_14.tex}
    \input{mle_cohs_24_14.tex}
    \caption{Same as Figures \ref{fig:mle_02_corrhm}, \ref{fig:mle_specs}, and
      \ref{fig:mle_cohs} respectively, for June $24$.}
\end{figure}

\newpage

\begin{figure}
  \centering
    \scalebox{0.7}{\input{mle_28_14_corrhm.tex}}
    \input{mle_specs_28_14.tex}
    \input{mle_cohs_28_14.tex}
    \caption{Same as Figures \ref{fig:mle_02_corrhm}, \ref{fig:mle_specs}, and
      \ref{fig:mle_cohs} respectively, for June $28$.}
\end{figure}

\end{document}